\documentclass[aps,pra,twocolumn,floats, showpacs,superscriptaddress,groupedaddress, 10pt]{revtex4-1}  % for review and submission
\usepackage{amsfonts}
\usepackage{amsmath,amssymb,bbm}
\usepackage{amsthm}
\usepackage{pspicture}
\usepackage{epsfig}
\usepackage{calc}
\usepackage{times}
\usepackage{color}
\usepackage{graphicx} 
\usepackage{enumerate}  
\usepackage{hyperref}

\usepackage{xcolor}

\usepackage{natbib}
%\bibliographystyle{abbrvnat}

%\graphicspath{{./graphics/}}%helpful if your graphic files are in another directory

\newcommand{\assign}{:=}
\usepackage{bbm}

\newtheorem{hypothesis}{Hypothesis}

\newcommand{\cA}{{\mathcal A}}
\newcommand{\cE}{{\mathcal E}}
\newcommand{\cH}{{\mathcal H}}

\newcommand{\cN}{{\mathcal N}}
\newcommand{\cR}{{\mathcal R}}

\newtheorem{theorem}{Theorem}
\newtheorem{definition}{Definition}
\newtheorem{lemma}{Lemma}
\newtheorem{corollary}{Corollary}
\theoremstyle{definition}

\begin{document}

\title{Code properties from holographic geometries}
	
	\author{Fernando \surname{Pastawski}}
	\affiliation{Dahlem Center for Complex Quantum Systems, Freie Universit{\"a}t Berlin, 14195 Berlin, Germany 
	}
	\author{John \surname{Preskill}}
	\affiliation{Institute for Quantum Information and Matter, California Institute of Technology, Pasadena, California 91125, USA
	}	

\date{\today}

\begin{abstract}
Almheiri, Dong, and Harlow \cite{Almheiri2015} proposed a highly illuminating connection between the AdS/CFT holographic correspondence and operator algebra quantum error correction (OAQEC). Here we explore this connection further. We derive some general results about OAQEC, as well as results that apply specifically to quantum codes which admit a holographic interpretation. We introduce a new quantity called {\it price}, which characterizes the support of a protected logical system, and find constraints on the price and the distance for logical subalgebras of quantum codes. We show that holographic codes defined on bulk manifolds with asymptotically negative curvature exhibit {\it uberholography}, meaning that a bulk logical algebra can be supported on a boundary region with a fractal structure. We argue that, for holographic codes defined on bulk manifolds with asymptotically flat or positive curvature, the boundary physics must be highly nonlocal, an observation with  potential implications for black holes and for quantum gravity in AdS space at distance scales small compared to the AdS curvature radius. 
\end{abstract}
\pacs{03.67.-a, 03.65.Vf}

\maketitle

\section{Introduction}

Quantum error correction and the holographic principle are two of the most far-reaching ideas in contemporary physics. Quantum error correction provides a basis for believing that scalable quantum computers can be built and operated in the foreseeable future. The AdS/CFT holographic correspondence is currently our best tool for understanding nonperturbative quantum gravity. In a remarkable paper \cite{Almheiri2015}, Almheiri, Dong, and Harlow suggested that these two deep ideas are closely related.

The AdS/CFT correspondence is an exact duality between two quantum theories --- quantum gravity in $(D+1)$-dimensional anti-de Sitter space and a conformally invariant quantum field theory (without gravity) defined on its $D$-dimensional boundary. The observables of the two theories are related by a complex dictionary, which maps local operators supported deep inside the bulk spacetime to highly nonlocal operators acting on the boundary CFT. Almheiri {\it et al.} proposed interpreting this dictionary as the encoding map of a quantum error-correcting code, where the code subspace is the low-energy sector of the CFT. Bulk local operators are regarded as ``logical'' operators which map the code subspace $\cH_C$ to itself, and are well protected against erasure of portions of the boundary.  The holographic dictionary is an encoding map which embeds the logical system inside the physical Hilbert space $\cH$ of the CFT. This proposal provides a rich and enticing new perspective on the relationship between the emergent bulk geometry and the entanglement structure of the CFT. 

To model holography faithfully, the quantum error-correcting code must have special properties which invite a geometrical interpretation. Code constructions which realize the ideas in \cite{Almheiri2015}, based on tensor networks which cover the associated bulk geometry, were constructed in \cite{Pastawski2015} and extended in \cite{Hayden2016}. Importantly, it was shown \cite{Hayden2016} that codes can have holographic properties even when the underlying bulk geometry does not have negative curvature; this insight may broaden our perspective on how AdS space is special.
 
Our goal in this paper is to develop these ideas further. 
Our motivation is twofold. On one hand, holographic codes have opened a new avenue in quantum coding theory, and it is worthwhile to explore more deeply how geometric insights can provide new methods for deriving code properties. 
On the other hand, holographic codes provide a useful tool for sharpening the connections between holographic duality and quantum information theory. Specifically, as emphasized in \cite{Almheiri2015}, holographic codes are best described and analyzed using the language of operator algebra quantum error correction \cite{Kribs2005, Beny2007, Beny2007a, Nielsen2007}. This powerful framework deserves to be better known, and much of this paper will be devoted to amplifying and applying it. 

We view our work here as a step along the road toward answering a fundamental question about quantum gravity and holography: What is the bulk? The AdS/CFT correspondence has bestowed many blessings, but should be regarded as a crutch that must eventually be discarded to clear the way for future progress in quantum cosmology. We think that strengthening the ties between geometric and algebraic properties will be empowering, and that operator algebra quantum error correction can help to point the way.  As examples, we provide an algebraic characterization of a point in the bulk spacetime, and discuss criteria for local correctability of the boundary theory. We also elaborate on the notion of {\it uberholography}, in which bulk physics can be reconstructed on a boundary subsystem with fractal geometry.

\subsection{Outline}

In \S \ref{sec:distance_defs} we review the formalism of operator algebra quantum error correction (OAQEC). We explain how the notion of code distance can be applied to a subalgebra of a quantum code's logical algebra, and we introduce the complementary notion of the {\it price} of a logical subalgebra, the size of the minimal subsystem of the physical Hilbert space which supports the logical subalgebra. We also derive some inequalities relating distance and price, and note that distance and price are equal for the logical subalgebra supported on a bulk point.  In \S \ref{sec:Connection} we review the connection between holography and quantum error correction, emphasizing the role of OAQEC in the analysis of holographic codes. We formulate the {\it entanglement wedge hypothesis}, a geometric criterion that determines whether a bulk logical subalgebra can be reconstructed on a specified boundary region, and work out some of its implications. We also discuss properties of punctures in the bulk geometry, which provide a crude description of black holes inside the bulk. 

In \S \ref{sec:uberholography} we explain the idea of uberholography, and compute the universal fractal dimension which determines how price and distance scale with system size for a holographic code defined on a hyperbolic disk. In \S \ref{sec:quantum-Markov} we investigate the conditions for {\it local correctability} in a holographic code, where by ``local'' we mean that the erasure of a small connected boundary region $R$ can be corrected by a recovery map which acts only in a slightly larger region containing $R$. We explain that holographic codes are locally correctable when the bulk geometry is negatively curved asymptotically, but not for asymptotic flat or positive curvature. We interpret this property as a signal of nonlocal physics on the boundary in the flat and positively curved cases, and we also relate properties of black holes to features of holographic codes with positive curvature.  In \S \ref{sec:holographic-singleton} we use geometrical and entropic arguments to prove a {\it strong quantum Singleton bound} for holographic codes, which constrains the price and distance of a logical subalgebra. \S \ref{sec:DiscussionANdOutlook} contains some concluding comments.

\section{Operator algebra quantum error correction}\label{sec:distance_defs}
In this section we briefly review the principles of operator algebra quantum error correction (OAQEC) \cite{Kribs2005, Beny2007, Beny2007a, Nielsen2007}, providing a foundation for the discussion of holographic codes. We explain how the notion of code distance can be generalized to the OAQEC setting. We also introduce a related but complementary notion, the {\it price} of a code and of a logical operator algebra. In a holographic context, the distance of a bulk logical algebra characterizes how well the bulk degrees of freedom are protected against erasure of portions of the boundary, while its price characterizes the minimal boundary region on which the bulk degrees of freedom can be reconstructed. 

\subsection{Von Neumann algebras}\label{subsec:VNalgebra}

Since we will formulate quantum error correction in an operator algebra framework, we begin by reviewing the structure of finite-dimensional Von Neumann algebras. For a finite-dimensional complex Hilbert space ${\cal H}$, a Von Neumann algebra on ${\cal H}$ is a complex vector space of linear operators acting on ${\cal H}$ which is closed under multiplication and Hermitian adjoint. Any such algebra ${\cal A}$ can be characterized in the following way. The Hilbert space ${\cal H}$ contains a subspace with a direct sum decomposition, such that each summand is a product of two tensor factors:
\begin{align}\label{eq:H-decompose}
{\cal H} \supseteq \bigoplus_\alpha {\cal H}_\alpha \otimes {\cal H}_{\bar \alpha},
\end{align}
where ${\cal H}_\alpha$ has dimension $d_\alpha$ and ${\cal H}_{\bar \alpha}$ has dimension $d_{\bar \alpha}$. 
The Von Neumann algebra ${\cal A}$ can be expressed as 
\begin{align}\label{eq:A-decompose}
{\cal A} = \bigoplus_\alpha {\cal M}_\alpha \otimes I_{\bar\alpha},
\end{align}
where ${\cal M}_\alpha$ denotes the algebra of $d_\alpha\times d_\alpha$ matrices and $I_{\bar\alpha}$ denotes the $d_{\bar \alpha}\times d_{\bar \alpha}$ identity matrix. The {\it commutant} ${\cal A}'$ of ${\cal A}$ contains all operators on ${\cal H}$ which commute with all operators in ${\cal A}$, and can be expressed as 
\begin{align}\label{eq:commutant}
{\cal A}' = \bigoplus_\alpha I_\alpha \otimes {\cal M}_{\bar\alpha},
\end{align}
The center $Z({\cal A})$ of ${\cal A}$, which is also the center of its commutant, contains all elements of the form  
\begin{align}\label{eq:center-element}
\bigoplus_\alpha m_\alpha ~ I_\alpha \otimes I_{\bar\alpha};
\end{align}
note that the center is abelian. 

A nontrivial Von Neumann algebra (with more than one summand) describes a quantum system with superselection sectors. 
We may regard $\alpha$ as a label which specifies a sector with a specified value of a locally conserved charge. 
% Coherent superpositions of different values of $\alpha$ are unphysical
By focusing on $\cA$, we confine attention to operators that preserve $\alpha$. 
To interpret the decomposition eq.(\ref{eq:H-decompose}), we imagine a system shared by two parties, Alice and Bob, where in each $\alpha$-sector the parties have equal and opposite charges. The algebras ${\cal A}$ and ${\cal A}'$ capture the charge-preserving operations that can be applied by Alice and Bob respectively. Equivalently, we may say that a nontrivial Von Neumann algebra describes a system which encodes both classical and quantum information, where operators in the center $Z({\cal A})= Z({\cal A}')$ act only on the classical data (the label $\alpha$), while ${\cal M}_\alpha$ acts on the quantum data in the sector labeled by $\alpha$. 

In operator algebra quantum error correction (OAQEC), we consider ${\cal H}_C$ to be a code subspace of a larger physical Hilbert space; hence ${\cal A}$ and ${\cal A}'$ are algebras of {\it logical } operators which preserve the code subspace. In the case where there is a single summand and ${\cal M}_{\bar \alpha}$ is one-dimensional, ${\cal A}$ is the complete algebra of logical operators. This is the standard setting of quantum error-correcting codes. If there is a single summand and ${\cal M}_{\bar\alpha}$ is nontrivial, then ${\cal A}$ is the algebra of ``bare'' logical operators in a {\it subsystem code}. In this setting, the code subspace 
\begin{align}
{\cal H}_C = {\cal H}_\alpha\otimes {\cal H}_{\bar\alpha}
\end{align}
has a decomposition into a protected tensor factor ${\cal H}_{\alpha}$ and a ``gauge'' factor ${\cal H}_{\bar\alpha}$, and ${\cal A}$ acts only on the protected system. 

The more general setting, with a nontrivial sum over $\alpha$, arises naturally in the context of holographic duality, where the code subspace corresponds to the low-energy sector of a conformal field theory whose gravitational dual is a bulk system with emergent gauge symmetry.   
The abelian center $Z({\cal A})$ of ${\cal A}$ can, for example, encode classical data of the bulk geometry (see Ref. \cite{Donnelly2016} for a recent tensor network interpretation). 
An important example of such a classical variable contained in ${\cal A}$ is the area operator (see Ref. \cite{Almheiri2016}) which arises in the Ryu-Takayanagi formula relating boundary entropy to bulk geometry. 

Another reason the OAQEC formalism is convenient in discussions of holography is that we can formulate the notion of {\it complementary recovery} \cite{Harlow2016} using this language.
If the physical (boundary) Hilbert space has a decomposition as a product $R R^c$ of two subsystems, we may ask whether a subalgebra ${\cal A}$ acting on the code space can be ``reconstructed'' as an algebra of physical operators with support on $R$, in which case we may say that erasure of $R^c$ can be corrected for the algebra ${\cal A}$. 
We say that the code exhibits complementary recovery if the logical subalgebra ${\cal A}$ can be reconstructed on $R$ and its commutant ${\cal A}'$ can be reconstructed on $R^c$. Equivalently, complementary recovery means that erasure of the physical subsystem $R^c$ is correctable with respect to ${\cal A}$ and erasure of the complementary physical subsystem $R$ is correctable with respect to ${\cal A}'$.

\subsection{Correctability}

Quantum error correction is a way of protecting properly encoded quantum states from the potentially damaging effects of noise with suitable properties. The noise can be described by a completely positive trace-preserving map (CPTP map), also called a quantum channel. A channel is a linear map which takes density operators to density operators; saying that the channel is ``completely'' positive means that the positivity of the density operator is preserved even when the channel acts on a system which is entangled with other systems. 

A channel ${\cal N}$ has an operator sum representation (also called a Kraus representation) of the form
\begin{align}\label{eq:kraus-schroedinger}
{\cal N}(\rho) = \sum_a N_a\rho N_a^\dagger,
\end{align}
where the condition
\begin{align}\label{eq:kraus-normalize}
\sum_a N_a^\dagger N_a = I
\end{align}
ensures that ${\rm tr}[N(\rho)] = {\rm tr}[\rho]$. The operators $\{N_a\}$ appearing in the Kraus representation are called Kraus operators. 
If there is only one Kraus operator in the sum, then the map is unitary, taking pure states to pure states. If there are two or more linearly independent Kraus operators, the map ${\cal N}$ describes a decoherence process, in which pure states can evolve to mixed states. 

Eq.(\ref{eq:kraus-schroedinger}) is the Schr\"{o}dinger picture description of the channel, in which ${\cal N}$ maps states to states. Since we will be particularly interested in whether {\it operators} (rather than states) are well protected against noise, we will find it more convenient to consider the Heisenberg picture description in which states are fixed and operators evolve. In this picture, the noise acts on the operator $X$ according to 
\begin{align}\label{eq:kraus-heisenberg}
{\cal N}^\dagger(X) = \sum_a N_a^\dagger X N_a.
\end{align}
We say that ${\cal N}^\dagger$ is the dual map of ${\cal N}$, also called the adjoint map of ${\cal N}$. The condition eq.(\ref{eq:kraus-normalize}) ensures that ${\cal N}^\dagger$ maps the identity operator to itself. 

We will consider a quantum system with Hilbert space ${\cal H}$, and a noise channel ${\cal N}$ acting on the system. Quantum error correction is a process that reverses the effect of ${\cal N}$. This error correction process is itself a channel, called the recovery channel and denoted ${\cal R}$. Unless ${\cal N}$ is unitary, error correction is not possible for arbitrary states of the system. Instead we consider a subspace ${\cal H}_C$ of ${\cal H}$, which is called a quantum error-correcting code (QECC), and settle for a recovery channel that corrects ${\cal N}$ acting on states of ${\cal H}_C$.  We say that ${\cal R}$ corrects ${\cal N}$ on code subspace ${\cal H}_C$ if, for any density operator $\rho$ supported on ${\cal H}_C$, 
\begin{align}
\left({\cal R}\circ {\cal N}\right)(\rho) = \rho,
\end{align}
and we say that the noise channel ${\cal N}$ is correctable on ${\cal H}_C$ if there exists a recovery operator ${\cal R}$ which corrects ${\cal N}$.

In the Heisenberg picture language, we may consider an algebra of logical operators which act on the code space. We denote the set of linear operators mapping ${\cal H}$ to ${\cal H}$ by ${\cal L}({\cal H})$.  If $P$ denotes the orthogonal projector from ${\cal H}$ to ${\cal H}_C$, then an operator $X\in {\cal L}({\cal H})$ is {\it logical} if $[X,P] = 0$; hence $X$ maps ${\cal H}_C = P{\cal H}$ to itself:
\begin{align}
X{\cal H}_C = XP{\cal H} = PX{\cal H} \subseteq {\cal H}_C.
\end{align}
It is clear from this definition that if $X$ is logical, so is its Hermitian adjoint $X^\dagger$ (because $P=P^\dagger$); furthermore, a linear combination of logical operators is logical and so is a product of logical operators. Hence the set of all logical operators forms an algebra, which we may call the complete logical operator of the code. The theory of operator algebra quantum error correction addresses whether a subalgebra of this complete logical algebra can be protected against noise. 

Sometimes we are only interested in how a logical operator $X$ acts on the code space, and so consider the corresponding operator $PXP$, which has support only on ${\cal H}_C$. Operators of this type are also closed under multiplication, because if $X$ and $Y$ both commute with $P$, then
\begin{align}
PXP\cdot PYP = P(XY)P. 
\end{align}
That is, if ${\cal A}$ is an algebra of logical operators in ${\cal L}({\cal H})$, then $P{\cal A}P$ is an algebra of logical operators in ${\cal L}({\cal H}_C)$. Two logical operators $X$ and $\tilde{X}$ might differ as elements of ${\cal L}({\cal H})$ yet act on the code space in the same way because $PXP = P\tilde{X}P$. In that case we say that $X$ and $\tilde{X}$ are {\it logically equivalent}, denoted  $X \sim_P \tilde{X}$.

Now we can formulate the notion of error correction in the Heisenberg picture. 
\begin{definition}[correctability]\label{Def:CorrectableNoise map} The noise channel ${\cal N}$ is correctable on the code space $\cH_C = P \cH$ with respect to the operator $X\in {\cal L}({\cal H})$ if and only if there exists a recovery channel $\cR$ such that
\begin{align}\label{eq:Correctability}
P(\cR \circ \cN)^\dagger(X) P = P X P.
\end{align}
\end{definition}
This means that the operator $X$ and the recovered operator $(\cR \circ \cN)^\dagger(X)$ act on the code space in the same way, though they may act differently on state vectors outside the code space. Because this condition is linear in $X$, the operators with respect to which ${\cal N}$ is correctable form a linear space. 

In an important series of works \cite{Poulin2005, Kribs2005, Kribs2006}, culminating in \cite{Beny2007a, Beny2007} the necessary and sufficient conditions for correctability of a logical algebra were derived. 
\begin{theorem}[criterion for correctability]\label{thm:correctability} Given code subspace $\cH_C = P\cH$ and logical subalgebra ${\cal A}$, 
the noise channel ${\cal N}$ with Kraus operators $\{N_a\}$ is correctable with respect to ${\cal A}$ if and only if
\begin{align}\label{eq:beny-condition}
[P N_a^\dagger N_bP , X] = 0
\end{align} for all $X\in \cA$ and each pair of Kraus operators $N_a, N_b$.
\end{theorem}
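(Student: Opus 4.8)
The plan is to prove the two implications separately, using the structure theory of von Neumann algebras from \S\ref{subsec:VNalgebra} to reduce the general statement to the familiar Knill--Laflamme situation sector by sector. The template is the ordinary quantum error-correcting code, where $\cA$ is the full logical algebra and the condition reads $PN_a^\dagger N_b P = \lambda_{ab}P$ for a positive matrix $\lambda_{ab}$; eq.(\ref{eq:beny-condition}) generalizes this by allowing the ``syndrome'' operators $PN_a^\dagger N_b P$ to be any elements commuting with $\cA$, rather than multiples of $P$. So the first observation I would record is that, because the central projectors onto the sectors $\cH_\alpha\otimes\cH_{\bar\alpha}$ belong to $\cA$, condition (\ref{eq:beny-condition}) is equivalent to the statement that each $PN_a^\dagger N_b P$ lies in the commutant of $\cA$ on the code space; writing it in the block form of eq.(\ref{eq:commutant}), this means $PN_a^\dagger N_b P = \bigoplus_\alpha I_\alpha\otimes m_{\bar\alpha}^{(ab)}$ for some operators $m_{\bar\alpha}^{(ab)}$ on the gauge factors $\cH_{\bar\alpha}$.

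For sufficiency I would then work one sector at a time. Fixing $\alpha$ and restricting the Kraus operators to the projector $P_\alpha$ onto $\cH_\alpha\otimes\cH_{\bar\alpha}$, the operators $m_{\bar\alpha}^{(ab)}$ assemble into a positive operator-valued Gram matrix indexed by $(a,b)$. Diagonalizing this matrix --- the operator-valued analogue of the unitary freedom $F_\mu=\sum_a u_{\mu a}N_a$ that brings $\lambda_{ab}$ to diagonal form in the scalar case --- I would pass to a new set of Kraus operators for which the syndrome becomes diagonal in $(a,b)$ and acts trivially on the protected factor $\cH_\alpha$. Polar-decomposing each redefined Kraus operator on the code space then exhibits it as an isometry on $\cH_\alpha$ tensored with a contraction on the gauge factor, and the recovery channel $\cR$ is built by inverting these isometries on their images and adjoining one extra Kraus term supported on the orthogonal complement to enforce normalization, eq.(\ref{eq:kraus-normalize}). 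By construction $\cR$ undoes the action of $\cN$ on $\cA$ while being permitted to scramble the gauge factor, which is exactly what Definition \ref{Def:CorrectableNoise map} demands.

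For necessity I would start from a recovery $\cR$ and consider the composite channel $\cE=\cR\circ\cN$, whose Heisenberg dual satisfies $P\cE^\dagger(X)P = PXP$ for every $X\in\cA$ by eq.(\ref{eq:Correctability}). The key tool is the multiplicative-domain property of unital completely positive maps: because the unital CP map $X\mapsto P\cE^\dagger(X)P$ fixes the self-adjoint operators in $P\cA P$, those operators lie in its multiplicative domain, so the map restricts to a $*$-homomorphism there and intertwines $\cE^\dagger$ with left and right multiplication by elements of $\cA$. Writing $\cE$ in terms of Kraus operators built from products of the recovery and noise Kraus operators and specializing these intertwining identities, I would pull each $X\in\cA$ through the channel and conclude that the syndrome operators $PN_a^\dagger N_b P$ must commute with all of $\cA$, recovering (\ref{eq:beny-condition}).

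The main obstacle I anticipate is the sufficiency construction in the presence of a nontrivial center and gauge factor. Unlike the scalar Knill--Laflamme case, diagonalizing an \emph{operator}-valued Gram matrix and then assembling a single genuinely trace-preserving recovery channel across all sectors at once requires care, since the recovery must faithfully restore both the protected factors $\cH_\alpha$ and the classical label $\alpha$ while being allowed to corrupt the gauge factors $\cH_{\bar\alpha}$. Keeping the bookkeeping of the two tensor factors straight --- and verifying that the extra Kraus term needed for normalization does not feed back into the protected algebra --- is where the real work lies.
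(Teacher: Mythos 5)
You should first note that the paper does not actually prove Theorem~\ref{thm:correctability}: it quotes the result from the OAQEC literature (Poulin, Kribs--Laflamme, B\'eny--Kempf--Kribs), so your proposal has to be judged against those proofs. Your opening reduction is correct: since the central projectors lie in $\cA$ and $[X,P]=0$, condition~(\ref{eq:beny-condition}) is equivalent to $PN_a^\dagger N_bP$ lying in the commutant of $P\cA P$ on $\cH_C$, hence having the block form $\bigoplus_\alpha I_\alpha\otimes m^{(ab)}_{\bar\alpha}$. Your necessity argument is also sound and is essentially the standard one; it can be completed as follows: write $\Phi(Y)=P(\cR\circ\cN)^\dagger(Y)P=\sum_{c,a}(R_cN_aP)^\dagger Y(R_cN_aP)$, note that $\Phi(X)=PXP$ and $[X,P]=0$ give $\Phi(X^\dagger X)=\Phi(X)^\dagger\Phi(X)$, so the equality case of the Schwarz inequality forces the per-Kraus identities $R_cN_aXP=XR_cN_aP$; summing over $c$ against $\sum_cR_c^\dagger R_c=I$, together with the adjoint identity $PN_b^\dagger R_c^\dagger X=PXN_b^\dagger R_c^\dagger$, eliminates the recovery operators and yields $[PN_b^\dagger N_aP,X]=0$.

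The genuine gap is in sufficiency, at the step ``diagonalizing the operator-valued Gram matrix'' by an ``operator-valued analogue of the unitary freedom $F_\mu=\sum_a u_{\mu a}N_a$.'' No such analogue exists: Kraus representations of a fixed channel are related only by \emph{scalar} isometric matrices $(u_{\mu a})$, and conjugation by $u\otimes I_{\bar\alpha}$ cannot in general bring a positive operator-valued matrix to index-diagonal form. Concretely, with two Kraus operators and a qubit gauge factor, take
\begin{equation*}
B \;=\; \tfrac{1}{4}\left(2\, I\otimes I \;+\; \sigma_x\otimes\sigma_z \;+\; \sigma_y\otimes\sigma_x\right),
\end{equation*}
where the first factor is the Kraus index and the second the gauge qubit. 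This block matrix is positive semidefinite, satisfies $B_{11}+B_{22}=I$, and is realizable as $PN_a^\dagger N_bP=I_\alpha\otimes B_{ab}$ by a Stinespring-type construction; but index-diagonalizing it would require a scalar unitary $u$ making $u\sigma_xu^\dagger$ and $u\sigma_yu^\dagger$ simultaneously diagonal, which is impossible because they anticommute. If instead you let the coefficients $u_{\mu a}$ be operators on the gauge factor, the $F_\mu$ are no longer Kraus operators of $\cN$, and the subsequent polar-decomposition and inversion steps have nothing to act on. The repair closest to your plan is to move the gauge factor into the environment \emph{before} diagonalizing: fix an orthonormal basis $\{\ket{j}\}$ of $\cH_{\bar\alpha}$ and set $M_{aj}:=N_aP_\alpha(I_\alpha\otimes\ket{j}):\cH_\alpha\to\cH$; then $M_{aj}^\dagger M_{bk}=\bra{j}B_{ab}\ket{k}\,I_\alpha$ are \emph{scalar} Knill--Laflamme conditions on the protected factor, the standard KL construction applies, and a short computation using $\sum_aB_{aa}=I_{\bar\alpha}$ shows the resulting recovery corrects $\cN$ for \emph{every} gauge state, not just a fiducial one --- a check that is essential and absent from your outline. (Alternatively, B\'eny--Kempf--Kribs argue via the complementary channel: the condition says the environment decouples from the protected factor, and Uhlmann's theorem produces the recovery.) By contrast, the multi-sector assembly you single out as the main obstacle is easy: for $\alpha\ne\beta$ the condition gives $P_\alpha N_a^\dagger N_bP_\beta=0$, so the noise maps distinct sectors to orthogonal subspaces of $\cH$; the recovery may first measure which of these subspaces the output occupies and then apply the per-sector recovery, which preserves all expectations of $\cA$ because $\cA$ is blind to inter-sector coherences.
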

If ${\cal A}$ is the code's complete logical algebra, eq.(\ref{eq:beny-condition}) becomes
\begin{align}
PN_a^\dagger N_b P= c_{ab}P,
\end{align}
which is the well known Knill-Laflamme error correction condition \cite{Knill1997}. 
More generally, eq.(\ref{eq:beny-condition}) says that $PN_a^\dagger N_b P$ lies in the commutant ${\cal A}'$ of ${\cal A}$. 
Invoking the general structure of Von Neumann algebras reviewed in \ref{subsec:VNalgebra}, we see from eq.(\ref{eq:commutant}) that $PN_a^\dagger N_b P$ is supported on the second factor of each summand. In effect, eq.(\ref{eq:beny-condition}) means that the Knill-Laflamme condition is satisfied in each superselection sector of the logical algebra. 

\subsection{Erasure and reconstruction}

A noise channel of particular interest is the {\it erasure} channel.  To define the erasure channel, we consider a decomposition of the Hilbert space ${\cal H}$ as a product of two tensor factors,
\begin{align}
{\cal H} = {\cal H}_R \otimes {\cal H}_{R^c};
\end{align}
we'll sometimes express this decomposition more succinctly as $R R^c$. Anticipating the geometrical interpretation of holographic codes, we will call $R$ a {\it region} and say that $R^c$ is its complementary region. We say that $R$ is {\it erased} when the quantum information in $R$ is lost while the information in $R^c$ is retained. A noise channel describing this process is
\begin{equation}\label{eq:erasure}
\Delta_R(\rho) = \sigma_R \otimes {\rm tr}_R(\rho),
\end{equation}
which is called the {\it erasure map} on $R$, or the {\it depolarizing map} on $R$; it throws away the state of $R$ and replaces it by the fixed state $\sigma_R$.

As for any noise channel, we say that the erasure channel ${\cal N}=\Delta_R$ is correctable with respect to the operator $X$ if there is a recovery operator ${\cal R}$ satisfying eq.(\ref{eq:Correctability}). As a convenient shorthand, we say that the subsystem $R$ is correctable if erasure of $R$ is correctable:
\begin{definition}[correctable subsystem]
Given a code subspace ${\cal H}_C=P{\cal H}$ and a logical subalgebra ${\cal A}$, a subsystem $R$ of ${\cal H}$ is {\it correctable} with respect to ${\cal A}$ if erasure of $R$ is a correctable map with respect to ${\cal A}$.
\end{definition}
Whether $R$ is correctable does not depend on how we choose the state $\sigma_R$ in eq.(\ref{eq:erasure}); if ${\cal R}$ recovers from $\Delta_R$, then ${\cal R}\circ \Delta_R$ recovers from $\Delta_R'$, where $\Delta_R'(\rho) = \sigma_R'\otimes {\rm tr}_R(\rho)$.

For the special case of erasure the criterion for correctability in Theorem \ref{thm:correctability} simplifies.  
We may choose $\sigma_R \propto I_R$, in which case the Kraus operators realizing $\Delta_R$ may be chosen to be (up to normalization) the complete set of Pauli operators supported on $R$, which constitute a complete basis for operators acting on $R$.
More generally, we may realize $\Delta_R$ by taking the Kraus operators in eq. \eqref{eq:kraus-schroedinger} as a Haar average over the unitary operators supported on $R$.
We conclude:

\begin{lemma} [criterion for correctability of a subsystem]\label{lemma:subsys-correct}
Give code subspace $\cH_C = P\cH$ and logical subalgebra $\cA$, subsystem $R$ of ${\cal H}$ is correctable with respect to $\cA$ if and only if
\begin{equation}\label{eq:erasure-criterion}[PYP,X] = 0\end{equation}
for all $X\in \cA$ and every operator $Y$ supported on $R$.
\end{lemma}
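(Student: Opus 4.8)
The plan is to specialize the general correctability criterion of Theorem~\ref{thm:correctability} to the erasure channel $\Delta_R$ and then exploit that the Pauli operators on $R$ form a complete operator basis. By definition the subsystem $R$ is correctable with respect to $\cA$ precisely when the erasure map $\Delta_R$ is correctable with respect to $\cA$, and as noted above this property is insensitive to the choice of the replacement state $\sigma_R$. I am therefore free to take $\sigma_R \propto I_R$, for which a valid Kraus representation of $\Delta_R$ consists of the operators $N_a = d_R^{-1} W_a$, where $\{W_a\}$ ranges over the $d_R^2$ Pauli operators supported on $R$ (acting as $I_{R^c}$ on the complement) and $d_R = \dim\cH_R$.

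First I would feed this Kraus representation into Theorem~\ref{thm:correctability}. The theorem asserts that $\Delta_R$ is correctable with respect to $\cA$ if and only if $[P N_a^\dagger N_b P, X] = 0$ for every $X \in \cA$ and every pair $a,b$; absorbing the normalization, this is the statement that $[P\, W_a^\dagger W_b\, P, X] = 0$ for all $a,b$ and all $X \in \cA$. The remaining task is purely linear-algebraic: to show that this finite family of commutator conditions is equivalent to the single family $[PYP, X]=0$ ranging over all $Y$ supported on $R$.

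For this I would use two facts. First, each product $W_a^\dagger W_b$ is, up to a phase, again a Pauli operator supported on $R$, so it is in particular an operator supported on $R$; hence the hypothesis $[PYP, X]=0$ for all such $Y$ immediately implies the Kraus-operator conditions, giving one direction. Second, the set $\{W_a^\dagger W_b\}$ spans the entire space $\cL(\cH_R)\otimes I_{R^c}$ of operators supported on $R$: since $\{W_a\}$ is already a basis and products of Paulis are Paulis up to phase, the pairwise products again exhaust a basis. Because the map $Y \mapsto [PYP, X]$ is linear in $Y$, its vanishing on a spanning set forces it to vanish on all of $\cL(\cH_R)\otimes I_{R^c}$; this supplies the converse direction and completes the equivalence.

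The one point requiring care --- and the step I would flag as the main obstacle --- is the interchange of Kraus representations. Theorem~\ref{thm:correctability} is phrased in terms of a fixed set of Kraus operators, whereas correctability of a channel is intrinsic. I would therefore record explicitly that any two Kraus representations $\{N_a\}$ and $\{M_c\}$ of $\Delta_R$ are related by $M_c = \sum_a u_{ca} N_a$ for an isometry $u$ (after padding with zero operators), so that each $P M_c^\dagger M_d P$ is a linear combination of the $P N_a^\dagger N_b P$ and conversely. Hence the linear span of $\{P N_a^\dagger N_b P\}$, and with it the truth of the commutator conditions, is independent of the chosen representation. This legitimizes the passage to the convenient Pauli representation and closes the argument.
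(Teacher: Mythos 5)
Your proof is correct and takes essentially the same route as the paper: specialize Theorem~\ref{thm:correctability} to the erasure channel with $\sigma_R \propto I_R$, take the (normalized) Pauli operators on $R$ as Kraus operators, and use the fact that their pairwise products span all operators supported on $R$ together with linearity of $Y \mapsto [PYP,X]$. Your explicit remark about the unitary freedom among Kraus representations is a point the paper handles only implicitly (via the independence of correctability from the choice of $\sigma_R$), but it is added rigor rather than a different argument.
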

Thus erasure of $R$ is correctable with respect to a logical algebra ${\cal A}$ if and only if $PYP$ lies in the commutant ${\cal A}'$ of ${\cal A}$ for any operator $Y$ supported on $R$. Because $X$ is logical ($[X,P]=0$), this criterion can also be written as $P[Y,X]P=0$; that is, the commutator $[Y,X]$ maps the code space to its orthogonal complement.  If ${\cal A}$ is the code's complete logical algebra, the criterion for correctability of erasure becomes
\begin{align}
PYP = c P,
\end{align}
the Knill-Laflamme criterion for erasure correction \cite{Knill1997}. 

If erasure of $R$ is correctable with respect to logical operator $X$, then it is possible to find an operator $\tilde{X}$ which is logically equivalent to $X$ ($PXP=P\tilde{X}P$) such that $\tilde{X}$ is supported on the complementary subsystem $R^c$. 
Borrowing the language of the AdS/CFT correspondence, we may say that $X$ can be ``reconstructed'' on $R^c$. In the quantum information literature, one says that $X$ can be ``cleaned'' on $R$, meaning that there is an equivalent logical operator that acts trivially on the correctable set.

To see why this reconstruction is possible, we may consider the dual $\Delta_R^\dagger$ of the erasure map $\Delta_R$, which satisfies
\begin{equation}
{\rm tr}\left(X\Delta_R(\rho)\right)= {\rm tr}\left(\Delta_R^\dagger(X)\rho\right).
\end{equation}
By the definition of correctability, eq.(\ref{eq:Correctability}), if $R$ is correctable with respect to $X$ then there is a recovery map ${\cal R}_R$ which corrects erasure, such that 
\begin{align}
P\left({\cal R}_R\circ \Delta_R\right)^\dagger (X) P = P\left(\Delta_R^\dagger\circ {\cal R}_R^\dagger\right)(X)P = PXP.
\end{align}
Furthermore, the dual map $\Delta_R^\dagger$ takes any (not necessarily logical) operator $Y$ to an operator which acts trivially on  $R$:
\begin{equation}\label{eq:dagger-map-identity}
\Delta_R^\dagger(Y) = I_R \otimes \tilde Y_{R^c}.
\end{equation}
We see that $\left(\Delta_R^\dagger\circ {\cal R}_R^\dagger\right)(X)$ is logically equivalent to $X$, and supported on $R^c$; that is, it is a reconstruction of $X$ on the complement of the erased subsystem.

To understand eq.(\ref{eq:dagger-map-identity}) we argue as follows. Consider a unitary map supported on $R$, under which
\begin{align}
\rho \mapsto \rho' =\left(U_R\otimes I_{R^c}\right) \rho \left(U_R^\dagger\otimes I_{R^c}\right) ;
\end{align}
hence $\rho_{R^c} = \rho'_{R^c}$, and therefore $\Delta_R(\rho) = \Delta_R(\rho')$, from which we infer that
\begin{align}\label{eq:Delta-U-invariant}
&{\rm tr}\left(\Delta_R^\dagger(Y) \rho \right) ={\rm tr}\left(\Delta_R^\dagger(Y) \rho' \right)\notag\\
&= {\rm tr}\left(\left(U_R^\dagger \otimes I_{R^c}\right)\Delta_R^\dagger(Y) \left(U_R\otimes I_{R^c}\right)\rho \right)
\end{align}
If it holds for any state $\rho$, eq.(\ref{eq:Delta-U-invariant}) implies
\begin{align}
\Delta_R^\dagger(Y) =\left( U_R^\dagger \otimes I_{R^c}\right)\Delta_R^\dagger(Y) \left(U_R\otimes I_{R^c}\right)
\end{align}
for any unitary $U_R$. Eq.(\ref{eq:dagger-map-identity}) then follows. Thus we have shown:
\begin{lemma}[reconstruction] \label{lemma:reconstruction}
Give code subspace $\cH_C = P\cH$ and logical subalgebra $\cA$, if subsystem $R$ of ${\cal H}$ is correctable with respect to $\cA$, then ${\cA}$ can be reconstructed on the complementary subsystem $R^c$. That is, for each logical operator in $\cA$, there is a logically equivalent operator supported on $R^c$.
\end{lemma}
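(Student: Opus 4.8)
The plan is to produce, for each $X\in\cA$, an explicit operator $\tilde X$ supported on $R^c$ with $P\tilde X P = PXP$, built out of the recovery channel whose existence is guaranteed by correctability. First I would unpack the hypothesis: since $R$ is correctable with respect to $\cA$, erasure $\Delta_R$ is a correctable map, so by Definition~\ref{Def:CorrectableNoise map} there is a recovery channel $\cR_R$ with $P(\cR_R\circ\Delta_R)^\dagger(X)P = PXP$ for every $X\in\cA$. Writing the dual of a composition in reverse order, $(\cR_R\circ\Delta_R)^\dagger = \Delta_R^\dagger\circ\cR_R^\dagger$, I would then set $\tilde X \assign \Delta_R^\dagger\big(\cR_R^\dagger(X)\big)$. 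By construction $P\tilde X P = PXP$, so $\tilde X$ is automatically logically equivalent to $X$, and the entire content of the lemma is reduced to showing that $\tilde X$ is supported on $R^c$.

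The crux is therefore the purely structural claim that the dual erasure map sends \emph{any} operator $Y$ to one acting trivially on $R$, i.e. $\Delta_R^\dagger(Y) = I_R\otimes \tilde Y_{R^c}$. To prove this I would exploit a symmetry of $\Delta_R$: because the map discards the state on $R$ and replaces it by a fixed $\sigma_R$, conjugating the input by any unitary $U_R$ supported on $R$ leaves the output untouched, $\Delta_R\big((U_R\otimes I_{R^c})\rho(U_R^\dagger\otimes I_{R^c})\big) = \Delta_R(\rho)$. Substituting this invariance into the defining adjoint relation $\tr\big(Y\,\Delta_R(\rho)\big) = \tr\big(\Delta_R^\dagger(Y)\,\rho\big)$ and letting $\rho$ range over all density operators, I would conclude that $\Delta_R^\dagger(Y)$ is invariant under conjugation by every $U_R\otimes I_{R^c}$, hence commutes with all such local unitaries.

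The last step, and the one I expect to require the most care, is turning ``commutes with all unitaries supported on $R$'' into ``is proportional to $I_R$ on the $R$ factor.'' This is a Schur-type fact: the unitaries $U_R\otimes I_{R^c}$ span the algebra $\cL(\cH_R)\otimes I_{R^c}$, whose commutant in $\cL(\cH_R\otimes\cH_{R^c})$ is exactly $I_R\otimes\cL(\cH_{R^c})$. Once this is in hand it applies verbatim to $Y=\cR_R^\dagger(X)$, giving $\tilde X = I_R\otimes\tilde X_{R^c}$, a reconstruction of $X$ on $R^c$, and completing the proof. I would note in passing that this route uses only the definition of correctability together with the invariance of $\Delta_R$; the criterion of Lemma~\ref{lemma:subsys-correct} is not needed here, though it supplies an alternative in which one first checks $PYP\in\cA'$ for $Y$ on $R$ and then argues for the existence of the cleaned operator.
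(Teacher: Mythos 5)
Your proposal is correct and follows essentially the same route as the paper: it defines the reconstruction as $\tilde X = \bigl(\Delta_R^\dagger\circ\cR_R^\dagger\bigr)(X)$, obtains logical equivalence directly from the correctability condition, and establishes support on $R^c$ by showing $\Delta_R^\dagger(Y)$ is invariant under conjugation by every unitary $U_R\otimes I_{R^c}$. The only difference is cosmetic --- you spell out the commutant (Schur-type) step that the paper leaves implicit when it says eq.~(\ref{eq:dagger-map-identity}) ``then follows.''
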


\subsection{Distance and price}\label{subsec:distance-price}

In the standard theory of quantum error correction, we consider the physical Hilbert space ${\cal H}$ to have a natural decomposition as a tensor product of small subsystems, for example a decomposition into $n$ qubits (two-level systems); $n$ is called the {\it length} of the code. This decomposition is ``natural'' in the sense of being motivated by the underlying physics --- {\it e.g.}, each qubit might be carried by a separate particle, where the particles interact pairwise. Typically we suppose that the code subspace ${\cal H}_C$ also has a decomposition into ``logical'' qubits; that is, that the dimension of the code space is $2^k$ where $k$ is a positive integer. We may define the {\it distance} $d$ of the code as the size of the smallest set $R$ of physical qubits for which erasure of $R$ is not correctable. Equivalently, $d$ is the size of the smallest region which supports observables capable of distinguishing among distinct logical states. We use the notation $[[n,k,d]]$ for a code with $n$ physical qubits, $k$ logical qubits, and distance $d$. For a given $n$, it is desirable for $k$ and $d$ to be as large as possible, but there is a tradeoff; larger $k$ means smaller $d$ and vice versa. This standard theory can be generalized in some obvious ways; for example, the dimension of the code subspace might not be a power of $2$, or the physical Hilbert space might be decomposed into higher-dimensional subsystems rather than qubits. 

The distance $d$ loosely characterizes the error-correcting power of the code. But if some encoded degrees of freedom are better protected than others, then a more refined characterization can be useful, since the distance captures only the worst case. In holographic codes in particular, bulk degrees of freedom far from the boundary are better protected than bulk degrees of freedom near the boundary. To describe the performance of a holographic code more completely, we may assign a distance value to each of the code's logical subalgebras.

As in the standard theory, we assume the physical Hilbert space is uniformly factorizable, ${\cal H}={\cal H}_0^{\otimes n}$, where ${\cal H}_0$ is finite-dimensional. In applications to quantum field theory, then, ${\cal H}$ is the Hilbert space of a suitably regulated theory; for example, if the theory is defined on a spatial lattice, a subsystem with Hilbert space ${\cal H}_0$ resides at each lattice site. Guided by this picture, we refer to the elementary subsystem as a ``site.'' By a ``region'' $R$ we mean a subset of the $n$ sites, and the number of sites it contains is called the size of $R$, denoted $|R|$.
We may now define the distance of a logical algebra ${\cal A}$.
\begin{definition}[distance]\label{Def:Distance} Given code subspace ${\cal H}_C = P{\cal H}$ and logical subalgebra ${\cal A}$, the {\it distance} $d({\cal A})$ is the size of the smallest region $R$ which is not correctable with respect to ${\cal A}$.
\end{definition}
If ${\cal A}$ is the code's complete logical algebra, then $d({\cal A})$ coincides with the standard definition of distance for a subspace code. In the case of a subsystem code, if ${\cal A}$ is the algebra of ``bare'' logical operators, which act nontrivially on the protected subsystem and trivially on the gauge subsystem, $d({\cal A})$ is the size of the smallest region which supports a nontrivial ``dressed'' logical operator, one which acts nontrivially on the protected subsystem and might act on the gauge subsystem as well. In that case, $d({\cal A})$ coincides with the standard definition of distance for a subsystem code. More generally, we might want to consider multiple ways of decomposing ${\cal H}_C$ into a protected subsystem and its complement, and our definition assigns a distance to each of these protected subsystems. 

For a given code ${\cal H}_C$ and logical algebra ${\cal A}$, we may also consider the smallest possible region $R$ such that {\it all} operators in ${\cal A}$ are supported on $R$. We call the size of this region the {\it price} of the algebra.
\begin{definition} [price] Given code subspace ${\cal H}_C = P{\cal H}$ and logical subalgebra ${\cal A}$,
the \emph{price} $p({\cal A})$ is the size of the smallest region $R$ such that, for every operator $X\in {\cal A}$, there is a logically equivalent operator $\tilde{X}$ which is supported on $R$. 
\end{definition}
As already noted, if region $R$ is correctable with respect to operator $X$, then an operator logically equivalent to $X$ can be reconstructed on the complementary region $R^c$. In this sense the notions of distance and price are dual to one another. The relation between distance and price can be formulated more precisely with some simple lemmas. 
\begin{lemma}[complementarity] \label{lemma:complementarity} Given code subspace ${\cal H}_C=P{\cal H}$ and logical subalgebra ${\cal A}$, where ${\cal H}$ contains $n$ sites, the distance and price of ${\cal A}$ obey
\begin{align}\label{eq:p+d}
p(\cA) + d(\cA) \leq n + 1.
\end{align}
\end{lemma}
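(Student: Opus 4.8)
The plan is to prove the inequality by exhibiting a relationship between correctable regions and regions supporting reconstructed operators. The key conceptual point is the duality already flagged in the text: if a region $R$ is correctable with respect to $\cA$, then by Lemma \ref{lemma:reconstruction} every logical operator in $\cA$ can be reconstructed on the complementary region $R^c$. This immediately ties price to distance through complementary regions, and the bound $p(\cA)+d(\cA)\le n+1$ should drop out by counting sites.

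First I would take $R_0$ to be a smallest noncorrectable region, so that $|R_0| = d(\cA)$ by Definition \ref{Def:Distance}. The defining property of distance is that every region strictly smaller than $d(\cA)$ \emph{is} correctable; equivalently, every region of size $d(\cA)-1$ is correctable with respect to $\cA$. The natural move is to remove a single site from $R_0$: pick any site $s\in R_0$ and set $R = R_0\setminus\{s\}$, so $|R| = d(\cA)-1$. Since $R$ is correctable with respect to $\cA$, Lemma \ref{lemma:reconstruction} guarantees that all of $\cA$ can be reconstructed on the complementary region $R^c$, which has size $n - (d(\cA)-1) = n - d(\cA) + 1$.

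By the definition of price, $p(\cA)$ is the size of the \emph{smallest} region on which all of $\cA$ can be reconstructed, so the existence of the reconstruction on $R^c$ gives $p(\cA) \le |R^c| = n - d(\cA) + 1$, which rearranges exactly to $p(\cA) + d(\cA) \le n+1$. This completes the argument.

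The step I expect to require the most care is verifying that a region of size $d(\cA)-1$ is genuinely correctable. This rests on a monotonicity fact: if $R$ is correctable and $R'\subseteq R$, then $R'$ is correctable as well. This follows directly from Lemma \ref{lemma:subsys-correct}, since every operator $Y$ supported on $R'$ is in particular supported on the larger region $R$, so the correctability criterion $[PYP,X]=0$ for all $X\in\cA$ is inherited. Thus every region of size at most $d(\cA)-1$ is correctable, and in particular so is the specific $R = R_0\setminus\{s\}$ used above. One should also handle the degenerate case $d(\cA)=1$ cleanly — here $R$ is empty and trivially correctable, with $R^c$ the full system — but this edge case is consistent with the stated bound and poses no real obstacle.
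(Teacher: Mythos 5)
Your proof is correct and takes essentially the same route as the paper's: exhibit a correctable region $R$ of size at least $d(\cA)-1$, invoke Lemma \ref{lemma:reconstruction} to support all of $\cA$ on $R^c$, and count sites to get $p(\cA)\le n-|R|$. The only cosmetic difference is the choice of $R$ --- the paper grows a correctable region until it is unextendable (so that $d(\cA)\le |R|+1$), whereas you take a region of size exactly $d(\cA)-1$, which is correctable directly by the definition of distance, so your subset-monotonicity remark, while valid, is not strictly needed.
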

\begin{proof}
Consider a region $R$ which is correctable with respect to ${\cal A}$ and also {\it unextendable}, meaning $R$ has the property that adding any additional site makes it noncorrectable. Then there are noncorrectable sets with $|R|+1$ sites, and therefore $d({\cal A}) \le |R|+1$. Furthermore, since $R$ is correctable, all operators in ${\cal A}$ can be reconstructed on its complement $R^c$; hence the $p({\cal A})\le |R^c| = n - |R|$. Adding these two inequalities yields eq.(\ref{eq:p+d}). 
\end{proof}

We may anticipate that if a region $R$ supports a nontrivial logical algebra, then erasing $R$ inflicts an irreversible logical error. This intuition is correct if the algebra is non-abelian. Let us say that a logical subalgebra is non-abelian if it contains two logical operators $X$ and $Y$ such that $PXP$ and $PYP$ are non-commuting. Then we have:
\begin{lemma}[no free lunch]\label{Lemma:NoFreeLunch} Given code subspace ${\cal H}_C=P{\cal H}$ and non-abelian logical subalgebra ${\cal A}$, the distance and price of ${\cal A}$ obey
\begin{align}\label{eq:free-lunch}
d(\cA) \leq p(\cA).
\end{align} 
\end{lemma}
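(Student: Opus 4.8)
The plan is to prove the underlying geometric fact that any region achieving the price must itself be noncorrectable; the inequality then follows immediately. Concretely, I would let $R$ be a region of minimal size $p(\cA)$ on which every element of $\cA$ can be reconstructed, and argue by contradiction that such an $R$ cannot be correctable with respect to $\cA$. Granting this, Definition \ref{Def:Distance} yields $d(\cA)\le |R| = p(\cA)$, since $R$ then witnesses a noncorrectable region of size $p(\cA)$, and the distance is the size of the \emph{smallest} such region.

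So suppose toward a contradiction that $R$ is correctable with respect to $\cA$. The non-abelian hypothesis supplies two operators $X,Y\in\cA$ with $[PXP,PYP]\neq 0$. Because $R$ achieves the price, there is an operator $\tilde X$ supported on $R$ with $P\tilde X P = PXP$. I would then invoke the correctability criterion of Lemma \ref{lemma:subsys-correct}: since $\tilde X$ is supported on the (assumed) correctable region $R$ and $Y\in\cA$, it must hold that $[P\tilde X P, Y]=0$, that is, $[PXP, Y]=0$.

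The crux is a short algebraic step converting this into a statement about the compressed operators $PXP$ and $PYP$. Using that $Y$ is logical, i.e. $[Y,P]=0$, one has $YP=PYP=PY$; sandwiching the identity $PXP\,Y = Y\,PXP$ between projectors as $P(\cdots)P$ and collapsing the redundant projectors via $P^2=P$ gives $(PXP)(PYP)=(PYP)(PXP)$. This contradicts $[PXP,PYP]\neq 0$, so $R$ is noncorrectable, which is exactly what was needed.

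I expect the only delicate point to be this compression step: one must track carefully which projectors can be absorbed (using $P^2=P$, $P\cdot PXP = PXP$, and $[Y,P]=0$) so that the correctability criterion, which a priori constrains $PXP$ against the \emph{full} logical operator $Y$, is legitimately promoted to a statement about the compressed $PYP$. Everything else is routine: the existence of the $R$-supported $\tilde X$ is guaranteed by the definition of price (equivalently by Lemma \ref{lemma:reconstruction}), and the concluding inequality is read directly off Definition \ref{Def:Distance}.
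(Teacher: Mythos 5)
Your proof is correct and takes essentially the same approach as the paper: both arguments use the definition of price to place a representative of one of the two noncommuting operators on the price-achieving region $R$, invoke the correctability criterion of Lemma \ref{lemma:subsys-correct}, and use $P^2=P$ together with $[X,P]=[Y,P]=0$ to identify the compressed commutator $[PXP,PYP]$ with the commutator appearing in that criterion. The only cosmetic differences are that you phrase it as a contradiction where the paper argues directly (exhibiting $0\neq[P\tilde{Y}P,X]$ as a violation of the criterion), and you swap the roles of $X$ and $Y$; your ``delicate'' projector-collapsing step is exactly the chain of identities the paper compresses into one displayed line.
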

\begin{proof} 
Consider two logical operators $X$ and $Y$ in ${\cal A}$ (both commuting with $P$), such that $PXP$ and $PYP$ are non-commuting. By the definition of $p({\cal A})$, there is a region $R$ with $|R|=p({\cal A})$ such that an operator $\tilde{Y}$ logically equivalent to $Y$ is supported on $R$; hence  
\begin{align}
0 \ne [PYP,PXP] = [P\tilde{Y}P,PXP]= [P\tilde{Y}P,X]. 
\end{align}
This means that region $R$ does not satisfy the criterion for correctability in Lemma \ref{lemma:subsys-correct}, and therefore is not correctable with respect to ${\cal A}$. By the definition of distance, $d(\cA) \le |R| = p(\cA)$, and eq.(\ref{eq:free-lunch}) follows. 
\end{proof}

If $\cA$ is abelian, then eq.(\ref{eq:free-lunch}) need not apply. Consider for example the three-qubit quantum repetition code, spanned by the states $|000\rangle$ and $|111\rangle$, and the logical algebra generated by
\begin{equation}
\bar Z = |000\rangle\langle 000| - |111\rangle\langle 111|.
\end{equation}
This algebra has price $p=1$, because the operator $Z\otimes I\otimes I$, supported on only the first qubit, is logically equivalent of $\bar Z$. On the other hand, the distance is $d=3$; because the logical algebra can be supported on any one of the three physical qubits, it is protected against the erasure of any two qubits. Note that $p+d = 4$, saturating eq.(\ref{eq:p+d}).

For a traditional subspace code, we may define the price of the code as the price of its complete logical algebra, just as we define the code's distance to be the distance of its complete logical algebra. The price and distance of a code are constrained by an inequality which can be derived from the subadditivity of Von Neumann entropy. This constraint on price is a corollary to the following theorem.
\begin{theorem}[constraint on correctable regions]
Consider a code subspace  $\cH_C= P\cH$, where $\cH$ contains $n$ sites, and let  $k = \log \dim \cH_C/ \log \dim \cH_0$. Suppose that $R_1$ and $R_2$ are two disjoint correctable regions. Then
\begin{equation}\label{eq:disjoint-correctable}
n - k \ge |R_1| + |R_2|.
\end{equation}
\end{theorem}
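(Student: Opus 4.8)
The plan is to introduce a reference system and reduce the claim to a single entropy inequality, exploiting the equivalence between correctability of erasure and decoupling from the reference. First I would adjoin a reference system $\bar C$ with $\dim \bar C = \dim \cH_C$ and consider the maximally entangled state $\ket{\Phi}$ on $\bar C \otimes \cH$ whose support on $\cH$ lies in the code subspace $\cH_C$. Writing $\cH = R_1 \otimes R_2 \otimes E$ with $E = (R_1 R_2)^c$, and letting $S(\cdot)$ denote the von Neumann entropy measured in units of $\log \dim \cH_0$, I would record that $S(\bar C) = k$ (the reduced state on $\bar C$ is maximally mixed), while $S(R_i) \le |R_i|$ and $S(E) \le |E|$, since the entropy of any region is bounded by its number of sites.

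The key tool is that correctability of a region $R$ with respect to the complete logical algebra is equivalent, in the state $\ket{\Phi}$, to the decoupling condition $\rho_{\bar C R} = \rho_{\bar C} \otimes \rho_R$, i.e. to vanishing mutual information $I(\bar C : R)=0$. I would establish this from the Knill-Laflamme erasure criterion $PYP = c(Y)\, P$ of Lemma \ref{lemma:subsys-correct}: a short computation of the matrix elements $\bra{\Phi}(Z_{\bar C}\otimes Y_R)\ket{\Phi}$ shows that $PYP \propto P$ for every $Y$ supported on $R$ forces $\rho_{\bar C R}$ to factorize. Applying this to the two correctable regions gives $S(\bar C R_1) = k + S(R_1)$ and $S(\bar C R_2) = k + S(R_2)$.

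Now I would combine these identities with purity and subadditivity. Since $\ket{\Phi}$ is pure on $\bar C R_1 R_2 E$, purity gives $S(\bar C R_1) = S(R_2 E)$ and $S(\bar C R_2) = S(R_1 E)$, so that $S(R_2 E) = k + S(R_1)$ and $S(R_1 E) = k + S(R_2)$. Subadditivity $S(R_2 E) \le S(R_2) + S(E)$ and $S(R_1 E) \le S(R_1) + S(E)$ then yields $k + S(R_1) \le S(R_2) + S(E)$ together with $k + S(R_2) \le S(R_1) + S(E)$; adding these and cancelling $S(R_1) + S(R_2)$ gives $k \le S(E)$. Finally, $S(E) \le |E| = n - |R_1| - |R_2|$ delivers the claim $n - k \ge |R_1| + |R_2|$.

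I expect the main obstacle to be the decoupling step, since that is where the operational notion of correctability must be converted into the entropic statement that drives the inequality; once $I(\bar C : R_i)=0$ is in hand, only elementary properties of von Neumann entropy (purity of a global pure state and subadditivity) are needed, and these combine cleanly. One subtlety worth checking is that the argument uses correctability with respect to the \emph{complete} logical algebra, so that the Knill-Laflamme form $PYP \propto P$ holds; this is the sense of ``correctable region'' relevant to the price and distance of the whole code.
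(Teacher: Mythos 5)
Your proof is correct and follows essentially the same route as the paper's: a reference system maximally entangled with the code space, the decoupling property $\rho_{\bar C R_i} = \rho_{\bar C}\otimes\rho_{R_i}$ derived from the Knill-Laflamme condition via matrix elements, purity of the global state, and subadditivity. The paper phrases the final step as nonnegativity of the mutual informations $I(R_i;R^c)$, which is algebraically identical to your adding the two subadditivity inequalities and cancelling $S(R_1)+S(R_2)$.
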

\begin{proof}

Let $A$ denote the code block $\cH_0^{\otimes n}$, let $T$ denote a reference system, and let $|\Phi\rangle$ denote a state of $AT$ in which $T$ is maximally entangled with the code space. The criterion for correctability says that if $R$ is a correctable region then for any operator $Y$ supported on $R$, $PYP = cP$; therefore, if $Y$ is supported on $R$ and $X$ is supported on $T$,
\begin{align}
%\langle \Phi| I_{R^c} \otimes Y_R \otimes X_T |\Phi\rangle= \langle \Phi| I_{R^c} \otimes Y_R \otimes X_T |\Phi\rangle\\
\langle \Phi|Y\otimes X |\Phi\rangle&= \langle \Phi|PYP\otimes X |\Phi\rangle
= c\langle \Phi| P\otimes X|\Phi\rangle\notag\\
& = c\langle \Phi|I\otimes X|\Phi\rangle = \langle \Phi| Y\otimes I|\Phi\rangle \langle \Phi|I\otimes X|\Phi\rangle.
\end{align}
Because $\langle \Phi|Y\otimes X|\Phi\rangle$ factorizes for any $Y$ supported on $R$ and $X$ supported on $T$, we conclude that the marginal density operator of $RT$ factorizes, 
\begin{equation}
\rho_{RT} = \rho_R\otimes \rho_T,
\end{equation}
if $R$ is correctable. 

To proceed we use properties of the entropy
\begin{equation}
S(\rho) = -{\rm tr}\left( \rho \log \rho \right),
\end{equation}
where for convenience we define entropy using logarithms with base $\dim \cH_0$.
Because  $R_1$ and $R_2$ are both correctable, $R_1T$ and $R_2T$ are product states; therefore
\begin{align}\label{eq:RT-products}
S(R_1T) = S(R_1) + S(T), \quad S(R_2T) = S(R_2) + S(T).
\end{align}
Denoting by $R^c$ the region of the code block complementary to $R_1R_2$, and noting that the overall state of $R_1R_2R^cT$ is pure, we have
\begin{align}\label{eq:purity-constraint}
S(R_1R^c) &= S(R_2T) = S(R_2) + S(T),\\
S(R_2R^c)& = S(R_1T) = S(R_1) + S(T);
\end{align}
adding these equations yields
\begin{align}
S(T) &= \frac{1}{2} \left( S(R_1R^c) + S(R_2R^c) - S(R_1)-S(R_2)\right)\\
& = S(R^c) -\frac{1}{2}\left(I(R_1;R^c) + I(R_2;R^c)\right).
\end{align}
Since the mutual information $I(R;R^c)$  is nonnegative (subadditivity of entropy), $S(T) = k$, and $S(R^c) \le |R^c| = n - |R_1| - |R_2|$, we obtain eq.(\ref{eq:disjoint-correctable}).
\end{proof}

\begin{corollary} [strong quantum Singleton bound] \label{coro:strong-singleton}
Consider a code subspace  $\cH_C= P\cH$, where $\cH$ contains $n$ sites, and where $k = \log \dim \cH_c / \log \dim \cH_0$. Then the distance $d$ and price $p$ of the code obey
\begin{equation}\label{eq:strong-singleton}
p - k \ge d - 1.
\end{equation}
\end{corollary}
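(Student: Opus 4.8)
The plan is to obtain the bound as a direct consequence of the preceding Theorem on the constraint on correctable regions, applied to the \emph{complete} logical algebra $\cA$ of the code, for which $d$ and $p$ are here defined. I will exhibit two disjoint correctable regions $R_1$ and $R_2$ with $|R_1| = d-1$ and $|R_2| = n-p$; the Theorem then gives $(d-1)+(n-p)\le n-k$, which rearranges immediately to $p-k\ge d-1$.

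Constructing $R_1$ is the easy part. By the Definition of distance, $d$ is the size of the smallest non-correctable region, so every region of size at most $d-1$ is correctable; hence any $(d-1)$-site region may serve as $R_1$. Here I assume $k\ge 1$, so that a non-correctable region exists and $d\ge 1$; the case $k=0$ is trivial, since then $\cH_C$ is one-dimensional, $p=0$, and the code carries no logical information.

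For $R_2$ I would take $R_2=R^c$, where $R$ is a region of size $p$ on which every logical operator can be reconstructed (a region achieving the price). The crucial claim is that $R^c$ is then correctable with respect to $\cA$. This is the converse of the reconstruction Lemma, and I expect it to be the main obstacle: the forward implication (correctable $\Rightarrow$ reconstructible on the complement) is already in hand, but what I need is that reconstructibility of the \emph{entire} logical algebra on $R$ forces erasure of $R^c$ to be correctable. For the complete logical algebra this is exactly complementary recovery, which I would establish through the decoupling picture used in the Theorem's proof: introducing a reference system $T$ maximally entangled with $\cH_C$, the reconstructibility of all logical operators on $R$ means that $T$ can be recovered from $R$ alone, so $T$ is uncorrelated with $R^c$, i.e. $I(R^c;T)=0$; this is precisely the factorization condition $\rho_{R^cT}=\rho_{R^c}\otimes\rho_T$ shown in that proof to characterize correctability. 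Thus $|R_2|=n-p$ and $R_2$ is correctable.

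Finally I must verify that $R_1$ and $R_2$ can be chosen disjoint. Since $\cA$ is the complete logical algebra with $k\ge 1$, it is non-abelian, so the no free lunch Lemma yields $d\le p$; hence $d-1<p=|R|$, and I may place the $(d-1)$-site region $R_1$ inside $R$, making it disjoint from $R_2=R^c$. With $R_1$ and $R_2$ disjoint and both correctable, the constraint Theorem gives $|R_1|+|R_2|=(d-1)+(n-p)\le n-k$, and therefore $p-k\ge d-1$, as claimed.
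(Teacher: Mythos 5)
Your proposal is correct and takes essentially the same approach as the paper: the paper's own (very terse) proof applies the disjoint-correctable-regions theorem to exactly your two regions, namely the complement of a price-achieving region (size $n-p$) and a set of $d-1$ sites placed inside the price region. The points you elaborate --- that one must argue the price region's complement is correctable (complementary recovery, the converse of the reconstruction lemma) and that disjointness follows from $d \le p$ via the no-free-lunch lemma --- are precisely the steps the paper leaves implicit, so your write-up is if anything more complete than the original.
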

\begin{proof}
In eq.(\ref{eq:disjoint-correctable}), choose $R_1$ to be the complement of the smallest region that supports the logical algebra of the code (hence $|R_1| = n - p$), and choose $R_2$ to be any set of $d-1$ qubits not contained in $R_1$. Then eq.(\ref{eq:strong-singleton}) follows.
\end{proof}

\begin{corollary} [quantum Singleton bound] \label{coro:singleton}
Consider a code subspace  $\cH_C= P\cH$, where $\cH$ contains $n$ sites, and where $k = \log \dim \cH_c / \log \dim \cH_0$. Then
\begin{equation}\label{eq:singleton}
n - k \geq 2 (d - 1)
\end{equation}
where $d$ is the code distance.
\end{corollary}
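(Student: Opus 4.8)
The plan is to obtain the quantum Singleton bound as an immediate consequence of two facts already established for the code's complete logical algebra: the strong quantum Singleton bound (Corollary~\ref{coro:strong-singleton}) and the complementarity lemma (Lemma~\ref{lemma:complementarity}). Both apply verbatim when $\cA$ is taken to be the complete logical algebra, so that $p$ and $d$ are the price and distance of the code itself. The strategy is simply to eliminate the price $p$ between the two inequalities, since $p$ enters complementarity as a summand bounded above and enters the strong Singleton bound as a quantity bounded below.

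Concretely, I would first record complementarity, $p + d \le n + 1$, together with the strong Singleton bound in the form $p \ge k + d - 1$. Combining these gives the chain $k + 2d - 1 \le p + d \le n + 1$, and the outer inequality rearranges at once to $n - k \ge 2(d-1)$, which is precisely eq.~(\ref{eq:singleton}). The entire argument is two lines of arithmetic once the prior results are in hand, so the substantive content has already been absorbed into the proof of the correctable-regions theorem (through subadditivity of entropy) and into Corollary~\ref{coro:strong-singleton}.

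An alternative, more self-contained route would bypass the strong Singleton bound and appeal directly to the theorem constraining correctable regions. Since the code has distance $d$, every region of size at most $d-1$ is correctable; I would then choose $R_1$ and $R_2$ to be two \emph{disjoint} regions each of size $d-1$ and read off $n - k \ge |R_1| + |R_2| = 2(d-1)$ from eq.~(\ref{eq:disjoint-correctable}). The one point requiring care in this second approach is the existence of two disjoint regions of size $d-1$, which presumes $n \ge 2(d-1)$; the combination route above sidesteps this bookkeeping and is therefore the cleaner option. I expect no genuine obstacle in either case, since the hard analytic work --- the factorization $\rho_{RT} = \rho_R \otimes \rho_T$ for correctable $R$ and the ensuing entropy inequalities --- is exactly what the preceding theorem supplies.
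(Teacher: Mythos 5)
Your primary argument --- eliminating $p$ between the strong quantum Singleton bound $p \ge k + d - 1$ and the complementarity inequality $p + d \le n+1$ --- is exactly the paper's proof, which simply combines Corollary~\ref{coro:strong-singleton} and Lemma~\ref{lemma:complementarity}, and your arithmetic is correct. The alternative route you sketch is a fine sanity check, but as you note it carries an extra disjointness caveat, so the combination route you lead with is the right (and the paper's) choice.
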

\begin{proof}
Combine Corollary \ref{coro:strong-singleton} and Lemma \ref{lemma:complementarity}.
\end{proof}

Because of its resemblance to the Singleton bound
\begin{equation}
n - k \geq d - 1,
\end{equation}
satisfied by classical $[n,k,d]$ codes, 
Eq.~(\ref{eq:singleton}) is called the {\it quantum Singleton bound}. We therefore call eq.(\ref{eq:strong-singleton}) the {\it strong quantum Singleton bound}. 
This bound is saturated by, for example, the [[7,1,3]] Steane code. In that case, the logical Pauli operators $\bar X$ and $\bar Z$ can both be supported on a set of three qubits; therefore the price is $p=3$ and the bound becomes
\begin{align}
1= k \le p - d + 1 = 3 - 3 +1 = 1. 
\end{align}

This strong quantum Singleton bound constrains the distance and price of a traditional subspace code, and it is natural to wonder what we can say about similar constraints on the distance and price of a logical subalgebra. 
In \S \ref{sec:holographic-singleton}, we will see that for holographic codes, using more sophisticated entropic arguments, we can derive an operator algebra version of the strong quantum Singleton bound.

\section{Holography and quantum error correction}\label{sec:Connection}

The AdS/CFT correspondence \cite{Maldacena1998} is a remarkable proposed equivalence between two theories --- quantum gravity in the bulk of a ($D$+1)-dimensional asymptotically anti-de Sitter spacetime, and conformally-invariant quantum field theory (CFT), without gravity, residing on the $D$-dimensional boundary of the spacetime. 
A very complex dictionary relates operators acting in the bulk theory to the corresponding operators in the boundary theory. 
This dictionary is only partially understood, but it is known that local operators acting deep inside the bulk correspond to highly nonlocal operators acting on the boundary. 
Much evidence indicates that geometrical properties of the bulk theory are intimately related to the structure of quantum entanglement in the boundary theory \cite{Ryu2006, VanRaamsdonk2010}. 
Further elucidation of this relationship should help to clarify how spacetime geometry can arise as an emergent property of a non-gravitational theory.

A puzzling feature of the correspondence is that a single bulk operator can be faithfully represented by a boundary operator in multiple ways. In a very insightful paper,  Almheiri, Dong, and Harlow \cite{Almheiri2015} suggested interpreting this ambiguity using the language of quantum error correction. According to their proposal, the low-energy sector of the boundary CFT can be viewed as a code subspace of the CFT Hilbert space, corresponding to weakly perturbed AdS geometry in the bulk, and the local operators acting on the bulk can be regarded as the logical operators acting on this code subspace. Local operators in the bulk can be reconstructed on the boundary in multiple ways, reflecting the property of quantum error-correcting codes that operators acting differently on the physical Hilbert space ${\cal H}$ may be logically equivalent when acting on the code subspace ${\cal H}_C$. High energy states of the CFT, which are outside the code space, correspond to large black holes in the bulk. 

In \cite{Pastawski2015}, {\it holographic codes} were constructed, which capture the features envisioned in \cite{Almheiri2015}. Such codes provide a highly idealized lattice regularization of the AdS/CFT correspondence, with bulk and boundary lattice sites. The code subspace, or bulk Hilbert space, is (disregarding some caveats expressed below) a tensor product of finite-dimensional Hilbert spaces, one associated with each bulk site, and likewise the boundary Hilbert space is a tensor product of finite-dimensional Hilbert spaces, one associated with each boundary site. The code defines an 
 embedding of the bulk Hilbert space inside the boundary Hilbert space. The embedding map can be realized by a tensor network construction based on a uniform tiling of the negatively curved bulk geometry. This tensor network provides an explicit holographic dictionary, in particular mapping each (logical) bulk local operator (with support on a single bulk site), to a corresponding physical nonlocal operator on the boundary (acting on many boundary sites). 

From the perspective of quantum coding theory, holographic codes are a family of quantum codes in which logical degrees of freedom have a pleasing geometrical interpretation, and as emphasized in \cite{Hayden2016}, this connection between coding and geometry can be extended beyond anti-de Sitter space. From the perspective of the AdS/CFT correspondence, holographic codes strengthen our intuition regarding how quantum error correction relates to emergent geometry. Both perspectives provide ample motivation for further developing these ideas. 

The precise sense in which the low-energy sector of a CFT realizes a quantum code remains rather murky. But loosely speaking the logical operators are CFT operators which map low-energy states to other low-energy states. Operators which are logically equivalent act on the low-energy states in the same way, but act differently on the high energy states which are outside the code space. The algebra of logical operators needs to be truncated, because acting on a state with a product of too many logical operators may raise the energy too high, so that the resulting state leaves the code space. 

From the bulk point of view, there is a logical algebra ${\cA}_x$ associated with each bulk site $x$, and formally the complete logical algebra of the code is, in first approximation,
\begin{align}\label{eq:bulk-product}
{\cal A} = \bigotimes_x {\cal A}_x,
\end{align}
where the tensor product is over all bulk sites. However, implicitly the number of bulk local operators needs to be small enough so that the back reaction on the geometry can be safely neglected. If so many bulk operators are applied that the bulk geometry is significantly perturbed, then the code space will need to enlarge, as explained \S \ref{subsec:beyond-finite}.  A further complication is that gauge symmetry in the bulk may prevent the bulk algebra from factorizing as in eq.(\ref{eq:bulk-product}).

The holographic dictionary determines how the logical operator subalgebra supported on a region in the bulk (a set of logical bulk sites) can be mapped to an operator algebra supported on a corresponding region on the boundary (a set of physical boundary sites). The geometrical interpretation of this relation between the bulk and boundary operator algebras will be elaborated in the following subsections. 

\subsection{Entanglement wedge reconstruction}\label{subsec:entanglement-wedge}

For holographic codes, whether a specified subsystem of the physical Hilbert space ${\cal H}$  is correctable with respect to a particular logical subalgebra can be formulated as a question about the bulk geometry. This connection between correctability and geometry is encapsulated by the {\it entanglement wedge hypothesis} \cite{Hubeny2007, Czech2012, Jafferis2016, Wall2014}, which holds in AdS/CFT \cite{Dong2016, Bao2016}. This hypothesis specifies the largest bulk region whose logical subalgebra can be represented on a given boundary region. 

The entanglement wedge hypothesis can be formulated for dynamical spacetimes, but for our purposes it will suffice to consider a special case. We consider a smooth Riemannian manifold $B$, which may be regarded as a spacelike slice through a static bulk spacetime. Somewhat more generally, we may imagine that $B$ is a slice though a Lorentzian manifold which is invariant under time reversal about $B$. Any $B$ can be locally extended to such a Lorentzian manifold which solves the Einstein field equation without matter sources. To formulate the entanglement wedge hypothesis for this case, we will need the concept of a minimal bulk surface embedded in $B$. We denote the boundary of $B$ by $\partial B$, and consider a boundary region $R\subseteq \partial B$. 
\begin{definition}[Minimal surface]\label{def:MinimalSurface}
Given a Riemannian manifold $B$ with boundary $\partial B$,  the {\it minimal surface} $\chi_R$ associated with a boundary region $R \subseteq \partial B$ is the minimum area co-dimension one surface in $B$ which separates $R$ from its boundary complement $R^c$ (see figure \ref{fig:MinimalSurfacesAndEWedge} for some examples).
\end{definition}
%The celebrated result by Ryu and Takayanagi \cite{Ryu2006} generalizes the Bekenstein-Hawking entropy of a black hole \cite{Bekenstein1973, Bardeen1973} and finds the entropy associated to the reduced density matrix $\rho_R$ of a boundary region $R$ to be proportional to the area of this minimal surface $\chi_R$.
We will, for the most part, assume that the minimal surface $\chi_R$ is unique and geometrically well defined.
Some choices of geometry $B$ and boundary region $R$ admit more than one minimal surface, but one may usually slightly alter the choice of $R$ in order to make the minimal surface $\chi_R$ unique.
Note that, according to Definition \ref{def:MinimalSurface}, $R$ and $R^c$ have the same minimal surface.
%, this is consistent with the recent interpretation of OAQEC of Harlow \cite{Harlow2016} where there is an area operator for $\chi_R$ which admits a representation in either of $\cH_R$ or $\cH_{R^c}$.

% Figure illustrating minimal surface and entanglement wedge
\begin{figure}[ht]
\begin{center}
\includegraphics[width=\columnwidth]{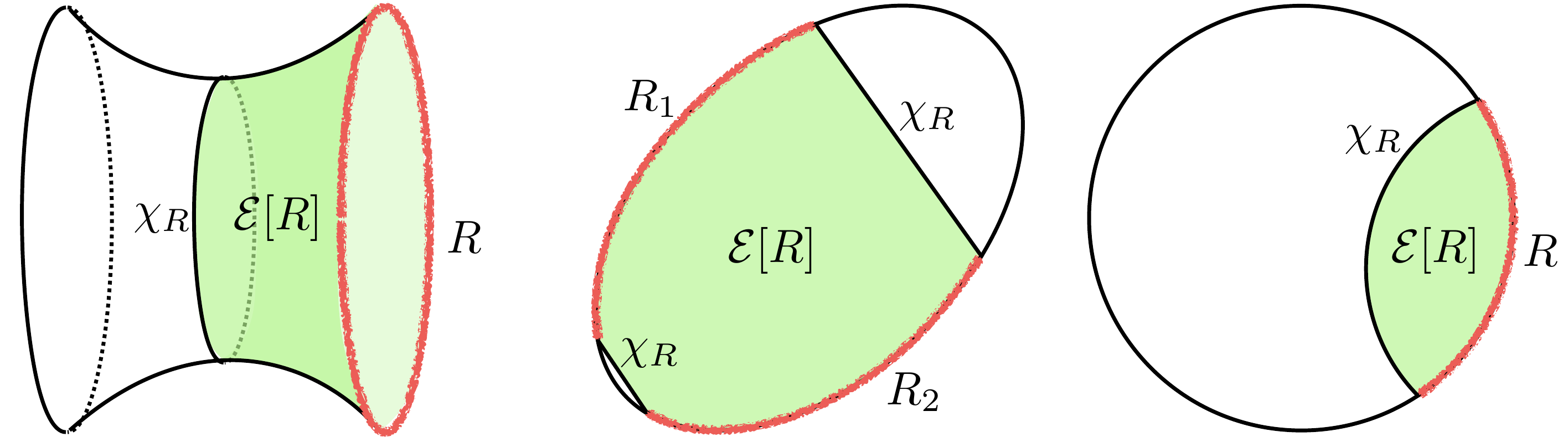}
\caption{
This figure illustrates  the geometric notions of minimal surface and entanglement wedge.
In each pane, we highlight a boundary region $R$ with a crayon stroke; 
the corresponding minimal surface $\chi_R$ is indicated, and the entanglement wedge $\cE[R]$ is shaded in green.
On the left $B$ is a hyperboloid whose boundary $\partial B$ has two connected components, where $R$ is one of those components (the one on the right). The minimal surface cuts the hyperboloid at its waist, and the entanglement wedge is everything to the right of $\chi_R$.
In the central pane $B$ is the interior of a Euclidean ellipse; the boundary region $R= R_1 \sqcup R_2$ has two connected components, and $\chi_R$ also has two connected components. As shown, the connected components of $\chi_R$ need not be homologous to $R_1$ and $R_2$, allowing $\cE[R_1 \sqcup R_2]$ to be significantly larger than $\cE[R_1] \sqcup \cE[R_2]$.
On the right $B$ is the Poincar\'e disc, portraying an infinite hyperbolic geometry.
%Here, although  $\partial B$ is found at infinity, it is graphically presented as a constant radius circle. 
The minimal surface is a geodesic in the bulk with endpoints on $\partial B$.
}\label{fig:MinimalSurfacesAndEWedge}
\end{center}
\end{figure}

%Entanglement wedge definition
Now we can define the entanglement wedge.
\begin{definition}[Entanglement wedge]\label{def:EntanglementWedge} Given a boundary region $R \subseteq \partial B$, the {\it entanglement wedge} of $R$ is a bulk region $\mathcal{E}[R] \subseteq B$, whose boundary is  $\partial  \cE[ R] \assign \chi_R \cup R$, where $\chi_R$ is the minimal surface for $R$ (see figure \ref{fig:MinimalSurfacesAndEWedge} for some examples).
\end{definition}
Note that under the uniqueness assumption for the minimal surface $\chi_R$, the entanglement wedge $\cE(R)$ of a boundary region $R$ and the entanglement wedge $\cE(R^c)$  of its boundary complement $R^c$ cover the full bulk manifold $B$, and they intersect exclusively at the minimal surface.
\begin{hypothesis}[Geometric complementarity]\label{hyp:GeometricComplementarity}
  Given a region $R \subseteq \partial B$ and its boundary complement $R^c$ we have that $\chi_R = \chi_{R^c} =\mathcal{E} [ R] \cap
  \mathcal{E} [ R^c]$ and $\mathcal{E} [ R] \cup \mathcal{E} [ R^c] = B$.
\end{hypothesis}
As we will see, this geometric statement, which holds for a generic manifold $B$ and boundary region $R$, leads to very strong code-theoretic guarantees under the entanglement wedge hypothesis.
%The fact that the minimal surface changes discontinuously implies that this hypothesis can not hold for arbitrary $R$. However, we shall allow ourselves to disregard such choices of $R$, which are non-generic in the continuum.

% Entanglement wedge hypothesis
For a holographic code, the entanglement wedge hypothesis states a sufficient condition for a boundary region to be correctable with respect to the logical subalgebra supported at a site in the bulk. Due to Lemma \ref{lemma:reconstruction}, this condition also informs us that the logical subalgebra can be reconstructed on the complementary boundary region. Evoking the continuum limit of the regulated bulk theory, we will sometimes refer to a bulk site as a {\it point} in the bulk, though it will be implicit that associated logical subalgebra is finite dimensional and slightly smeared in space. 
\begin{hypothesis}[Entanglement wedge hypothesis]\label{hyp:Entanglement Wedge}
  If the bulk point $x$ is contained in the entanglement wedge $\mathcal{E} [ R]$ of boundary region $R$, then the complementary boundary region $R^c$ is correctable with respect to the logical bulk subalgebra  $\cA_x$. Thus for each operator in $\cA_x$, there is a logically equivalent operator supported on $R$. 
 \end{hypothesis}
 This connection between holographic duality and operator algebra quantum error correction has many implications worth exploring.
 
 For a holographic code corresponding to a regulated boundary theory, there are a finite number of boundary sites, each describing a finite-dimensional subsystem. Thus we can speak of the length $n$ of the code, meaning the number of boundary sites, as well as the distance $d$ and price $p$ of the code (or of any logical subalgebra), which also take integer values. It is convenient, though, to imagine taking a formal continuum limit of the boundary theory in which the total boundary volume stays fixed as  $n\to \infty$, while maintaining a uniform number of boundary sites per unit boundary volume as determined by the bulk induced metric. Without intending to place restrictions on the dimension of $B$, from now on we will use the term {\it area} when speaking about the size of a boundary region, and save the term {\it volume} for describing the size of a bulk region. In the continuum limit, we may still speak of $n$, $d$, and $p$, but now taking real values;  $n$ becomes the total area of the boundary, while $d(\cA)$ is the area of the smallest boundary region which is not correctable with respect to logical subalgebra $\cA$, and $p(\cA)$ is the area of the smallest boundary region which supports $\cA$. For now, to ensure that back reaction on the bulk geometry is negligible, we will suppose that the bulk algebra $\cA$ has support on a constant number of points. In the formal continuum limit, then, the logical algebra has negligible dimension, in effect defining a $k\approx 0$ code if the size of the logical system is expressed in geometrical units. 

The entanglement wedge hypothesis has notable consequences for the logical subalgebra $\cA_x$  supported at a bulk point $x$. Consider the distance of $\cA_x$.  For any boundary region $R$ with boundary complement $R^c$, if $x\in \cE[R^c]$ then $R$ is correctable with respect to $\cA_x$. On the other hand, if $x\in \cE[R]$, then $\cA_x$ can be reconstructed on $R$; arguing as in the proof of Lemma \ref{Lemma:NoFreeLunch}, $R$ cannot be correctable with respect to $\cA_x$ if $R$ supports $\cA_x$ and $\cA_x$ is non-abelian. By geometric complementarity, either $x\in \cE[R^c]$ or $x \in \cE[R]$; we conclude that $R$ is correctable with respect to $\cA_x$ if and only if $x \in \cE[R^c]$. By the definition of distance, then,
\begin{align}\label{eq:point-distance}
 d(\cA_x) = \min_{R\subseteq \partial B: x\not \in \mathcal{E}(R^c)} |R|,
\end{align}
assuming $\cA_x$ is non-abelian. 

Now consider the price of $\cA_x$. According to the entanglement wedge hypothesis, $\cA_x$ can be reconstructed on $R$ if $x\in\cE[R]$. On the other hand, if $x\not \in \cE[R]$, then $x\in\cE[R^c]$ by geometric complementarity, and therefore $\cA_x$ can be reconstructed on $R^c$. Because operators supported on $R$ commute with operators supported on $R^c$, it is not possible for $\cA_x$ to be reconstructed on both $R$ and $R^c$ if $\cA_x$ is non-abelian. We conclude that $\cA_x$ can be reconstructed on $R$ if and only if $x\in\cE[R]$. By the definition of price, then, 
\begin{align}\label{eq:point-price}
p(\cA_x) = \min_{R\subseteq \partial B: x\in \mathcal{E}(R)} |R|,
\end{align}
assuming $\cA_x$ is non-abelian. 

Geometric complementarity says that $x\in \cE[R]$ if and only if $x\not\in \cE[R^c]$. Therefore by comparing eq.(\ref{eq:point-distance}) and eq.(\ref{eq:point-price}), we see that the expressions for the distance and the price are identical. Thus we have shown:
%%%%%
\begin{lemma}[price equals distance for a point]\label{lemma:Complementarity} For a holographic code, let $\cA_x$ be the non-abelian logical algebra associated with a bulk point $x$.
Then 
\begin{align}
p(\cA_x) = d(\cA_x) .
\end{align}
\end{lemma}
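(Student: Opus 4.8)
The plan is to reduce the statement to a comparison of two minimization formulas --- the expression for $d(\cA_x)$ and the expression for $p(\cA_x)$, each as a minimum of $|R|$ over a family of boundary regions --- and then to observe that geometric complementarity makes these two families coincide. The two geometric inputs are the entanglement wedge hypothesis (Hypothesis~\ref{hyp:Entanglement Wedge}) and geometric complementarity (Hypothesis~\ref{hyp:GeometricComplementarity}), while the crucial algebraic input is the non-abelian assumption on $\cA_x$.

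First I would pin down exactly which boundary regions $R$ are correctable with respect to $\cA_x$. Applying the entanglement wedge hypothesis with the region taken to be $R^c$, we learn that if $x\in\cE[R^c]$ then $R$ is correctable. For the converse I would use the non-abelian structure: by geometric complementarity the only alternative is $x\in\cE[R]$, and then the entanglement wedge hypothesis lets us reconstruct $\cA_x$ on $R$ itself; repeating the argument in the proof of Lemma~\ref{Lemma:NoFreeLunch}, a non-abelian algebra reconstructed on $R$ forces $R$ to be noncorrectable. Hence $R$ is correctable with respect to $\cA_x$ if and only if $x\in\cE[R^c]$.

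Second I would read off the two formulas. By Definition~\ref{Def:Distance}, the distance is the size of the smallest noncorrectable region, so $d(\cA_x)=\min_{R:\,x\notin\cE[R^c]}|R|$. For the price I would characterize reconstructibility in the same way: $\cA_x$ can be reconstructed on $R$ precisely when $x\in\cE[R]$ --- reconstructibility in that case is the entanglement wedge hypothesis, while the case $x\notin\cE[R]$ is excluded because a non-abelian $\cA_x$ cannot be reconstructed on both $R$ and $R^c$, whose operators mutually commute. This gives $p(\cA_x)=\min_{R:\,x\in\cE[R]}|R|$.

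The final step, which does the real work, is geometric complementarity: since $\cE[R]\cup\cE[R^c]=B$ with the two wedges meeting only on the minimal surface $\chi_R$, we have $x\in\cE[R]$ if and only if $x\notin\cE[R^c]$, for $x$ away from the measure-zero surface $\chi_R$ (which we may assume by the genericity and slight smearing already invoked for $\cA_x$). The two minimizations therefore range over identical families of regions, so their minima agree and $p(\cA_x)=d(\cA_x)$. The main subtlety to watch is that non-abelianness is indispensable at \emph{both} places where I exclude a case; without it the three-qubit repetition code ($p=1$, $d=3$) already shows the conclusion can fail.
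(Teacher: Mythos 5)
Your proposal is correct and follows essentially the same route as the paper: characterize correctability of $R$ by $x\in\cE[R^c]$ and reconstructibility on $R$ by $x\in\cE[R]$ (using the entanglement wedge hypothesis for one direction and non-abelianness, via the Lemma~\ref{Lemma:NoFreeLunch} argument and the commutation of operators on $R$ with those on $R^c$, for the converses), then invoke geometric complementarity to identify the two minimizations. Your extra caveat about $x$ lying off the measure-zero minimal surface $\chi_R$ is a small refinement of care the paper handles implicitly via its uniqueness/genericity assumptions, not a different argument.
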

%%%%%
Thus, in a holographic code, the bound $p(\cA_x) \ge d(\cA_x)$ in Lemma \ref{Lemma:NoFreeLunch} is saturated by the logical subalgebra of a point. It is intriguing that a geometrical point admits this simple algebraic characterization, suggesting how geometrical properties might be ascribed to logical subalgebras in a broader setting. 

We can extend this reasoning to a bulk region $X$ which contains a finite number of bulk points, continuing to assume that the number of operator insertions is sufficiently small that back reaction on the bulk geometry can be neglected, and that the logical subalgebra factorizes as in eq.(\ref{eq:bulk-product}). In that case, a boundary region $R$ is correctable with respect to the algebra $\cA_X$ if it is correctable with respect to $\cA_x$ for each bulk point $x$ in $X$. Therefore,
\begin{align}\label{eq:distanceMultipoint}
d(\cA_{X}) = \min_{x \in X} d(\cA_{x}).
\end{align}
Eq.(\ref{eq:point-price}) can also be extended to a bulk region $X$:
\begin{align}\label{eq:priceMultipoint}
p(\cA_{X}) \assign \min_{R \subseteq \partial B: X\subseteq \cE[R] } |R| ,
\end{align}
assuming that each nontrivial operator in $\cA_X$ fails to commute with some other operator in $\cA_X$. If all points of $X$ are contained in $\cE[R]$, it follows that $p(\cA_X)\le |R|$.

We emphasize again that these properties apply not only to AdS bulk geometry, but also to other quantum code constructions satisfying geometric complementarity and the entanglement wedge hypothesis. Such codes were constructed in  \cite{Pastawski2015} for tensor networks associated with tilings of bulk geometries having non-positive curvature. These results were extended to arbitrary graph connectivity in \cite{Hayden2016}, where a discrete generalization of the entanglement wedge hypothesis was found to be valid in the limit of large bond dimension. Taking a suitable limit, these codes can be viewed as regularized approximations to underlying smooth geometries. 

\subsection{Punctures in the bulk}\label{subsec:beyond-finite}

In quantum gravity, there is an upper limit on the dimension of the Hilbert space that can be encoded in a physical region known as the Bousso bound \cite{Bousso1999}; the log of this maximal dimension is proportional to the surface area of the region.
When one attempts to surpass this limit, a black hole forms, with entropy proportional to the area of its event horizon.

This feature of bulk quantum gravity can be captured by holographic codes, rather crudely, if we allow punctures in the bulk. A subsystem of the code space $\cH_C$  resides along the edge of each such  puncture, and the holographic tensor network provides an isometric embedding of this logical subsystem in the physical Hilbert space $\cH$ which resides on the exterior boundary of the bulk geometry. This picture is crude because in an actual gravitational theory a black hole in the bulk would carry mass and modify the bulk curvature outside the black hole. For our purposes this impact on the curvature associated with a puncture will not be particularly relevant and we will for the most part ignore it here.

In the continuum limit, we associate the holographic code with a Riemannian bulk manifold $B$ as in \S \ref{subsec:entanglement-wedge}, but now the boundary $\partial B$ is the union of two components:  the exterior (physical) boundary, denoted $\Phi$, and the interior (logical) boundary, denoted $\Lambda$. The logical boundary is the union of the boundaries of all punctures. For the physical region $R\subseteq \Phi$, we will now need to distinguish between its {\it boundary complement} $ \partial B \setminus R$ and its {\it physical complement} $\Phi \setminus R$. The entanglement wedge hypothesis continues to apply, where now it is understood that the minimal surface $\chi_R$ separates $R$ from its boundary complement. In AdS/CFT this is called the {\it homology constraint}, meaning that $R\sqcup \chi_R$ is the boundary of a bulk region. 

As we take the continuum limit $n\to \infty$ with the bulk geometry fixed, we assume as before that the density of sites per unit area on $\partial B$ is uniform, with the same density on both the physical boundary and the logical boundary. If we assume as in \S \ref{subsec:entanglement-wedge} that the bulk logical algebra outside of the punctures is supported on a bounded number of bulk points, then the size $k$ of the logical system  is determined by the area of the logical boundary:
\begin{align}\label{eq:k-over-n}
k / n = |\Lambda| / |\Phi|.
\end{align}

A holographic code without punctures obeys the celebrated Ryu-Takayanagi formula \cite{Ryu2006}, which asserts that for a boundary region $R$, the entanglement entropy of $R$ with its physical  complement $R^c := \Phi\setminus R$ is the area $|\chi_R|$ of the minimal surface separating $R$ from $R^c$, with area measured in the same units used to define $|\Phi|$ and $|\Lambda|$. To extend this formula to a manifold $B$ with punctures, we imagine introducing a reference system $T$ which is maximally entangled with the logical system $\Lambda$, so that the joint state of $T \sqcup \Phi$ is pure. 
For a physical boundary region $R\subseteq \Phi$, the area $|\chi_R|$  of the minimal surface separating $R$ from its boundary complement $\Lambda \sqcup (\Phi \setminus R)$ is the entanglement entropy of $R$ with $T \sqcup (\Phi \setminus R)$ in this pure state.

One way to visualize the purifying reference system $T$ is inspired by the thermofield double construction used in AdS/CFT \cite{Maldacena2003}. 
Given a manifold $B$ with physical boundary $\Phi$ and logical boundary $\Lambda$, we introduce a second copy $\tilde B$ of the manifold, with physical boundary $\tilde \Phi$ and logical boundary $\tilde \Lambda$ (see figure \ref{fig:BulkDoubling}). Then we join $\Lambda$ and $\tilde \Lambda$, obtaining manifold $B\tilde B$, whose physical boundary $\Phi\tilde\Phi$ has two connected components. This construction describes two holographic codes, whose logical systems are maximally entangled; the second copy of the code provides the reference system purifying the first copy. 
The combined manifold $B\tilde B$ has no punctures, and we can apply the original formulation of the Ryu-Takayanagi formula to $B\tilde B$. 
For a boundary region $R\subseteq \Phi$, the minimal surface $\chi_R$ separating $R$ from $\tilde \Phi \sqcup (\Phi\setminus R)$ never reaches into $\tilde B$, and therefore coincides  with the minimal surface separating $R$ from $\Lambda \sqcup (\Phi \setminus R)$.
%%%%
(We note that if the logical boundary $\Lambda$ represents the event horizon of a static (2+1)-dimensional black hole, then, because a geodesic outside the black hole is the spatial trajectory of a light ray, $\chi_R$ either fully contains $\Lambda$ or avoids it entirely. For a nonstatic geometry, it is possible for $\chi_R$ to include only part of $\Lambda$.)
%%%%
\begin{figure}[ht]
	\begin{center}
		\includegraphics[width=\columnwidth]{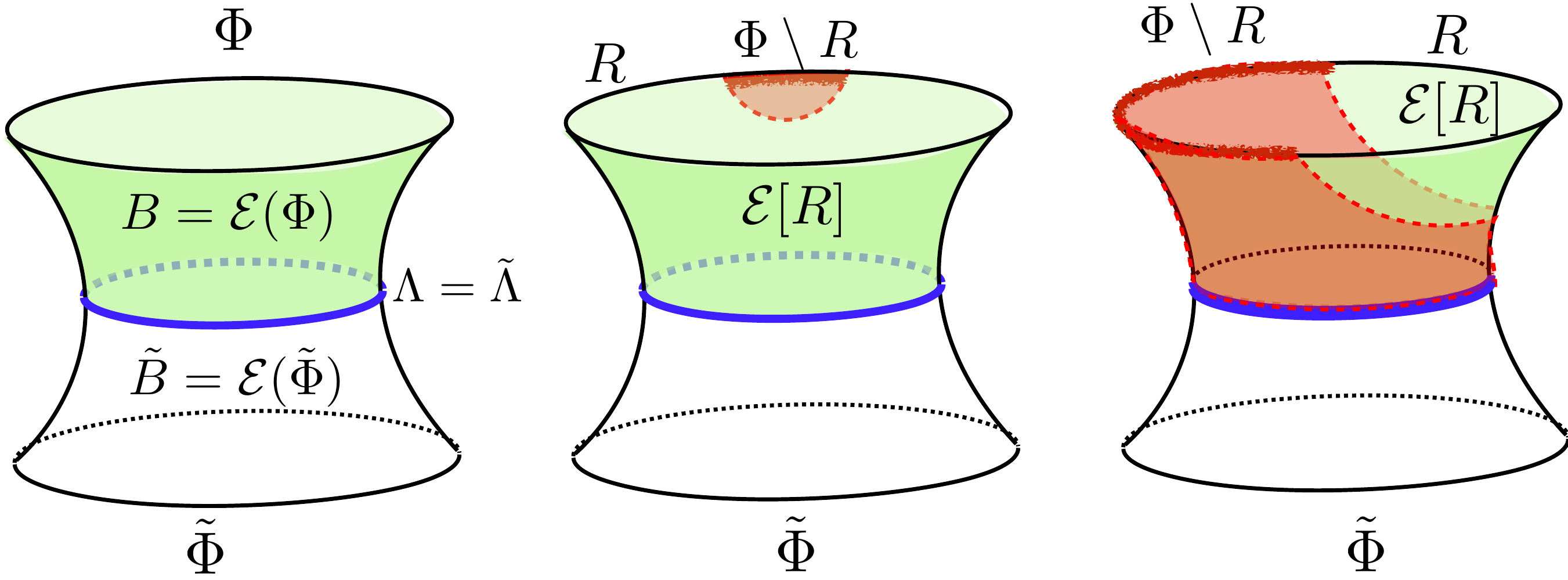}
		\caption{The left pane illustrates the thermofield double construction, in which a bulk manifold $B$ with logical boundary $\Lambda$ is extended to a two copies of $B$ with their logical boundaries identified. This doubled manifold $B\tilde B$ describes two holographic codes whose logical systems are maximally entangled. The other two panes illustrate that, for a boundary region $R$ contained in the physical boundary $\Phi$ of $B$, the corresponding minimal surface lies in $B$. 
}\label{fig:BulkDoubling}
	\end{center}
\end{figure}

The Ryu-Takayanagi formula relating entanglement entropy $S(R)$ to $|\chi_R|$ is actually the leading term in a systematic expansion \cite{Faulkner2013}, in which the next correction arises from entanglement among bulk degrees of freedom, specifically the bulk entanglement of $\cE(R)$ with its bulk complement. 
In fact the division of $S(R)$ into the geometrical contribution $|\chi_R|$ and the bulk entanglement contribution is not a renormalization group invariant; the geometrical contribution dominates in the extreme low-energy limit of the boundary theory, and the bulk entropy becomes more important as we probe the boundary theory at shorter and shorter distances. We implicitly work in the low-energy limit in which geometrical entanglement is dominant. Even in this framework, it is possible to include bulk entanglement in our discussion, in accord with the so-called ER=EPR principle \cite{Maldacena2013}, which identifies entanglement with wormhole connectedness. We may, for example, consider two punctures of equal size in the bulk, and identify their logical boundaries $\Lambda_1$ and $\Lambda_2$, so that the two logical systems are maximally entangled. If $R$ is sufficiently large, the minimal surface $\chi_R$ in the bulk might pass in between the two punctures, and include (say) the logical boundary $\Lambda_1$ in order to satisfy the homology constraint. Then the entanglement entropy of $R$ includes a contribution $|\Lambda_1|$ due to inclusion of the puncture in its entanglement wedge $\cE[R]$. In this way, the geometrical entanglement of $R$ can capture the bulk entanglement shared by the punctures. 

Up until now we have implicitly assumed that the holographic code provides an isometric embedding of the logical system $\Lambda$ into the physical system $\Phi$. This requirement places restrictions on the geometry of the puncture(s). The Ryu-Takayanagi formula provides one possible way to understand the  restriction. Suppose that the reference system $T$ is maximally entangled with the logical boundary $\Lambda$, so that the state of $T\sqcup \Phi$ is pure, and the entanglement entropy of $\Phi$ is $|\Lambda|$. This must agree with the geometrical entropy given by $|\chi_\Phi| = |\chi_\Lambda|$, the area of the minimal surface separating $\Phi$ and $\Lambda$. Holographic tensor network constructions suggest a stronger constraint,
\begin{align}\label{eq:chi=equals=Lambda}
\chi_\Phi = \chi_\Lambda = \Lambda,
\end{align}
since in that case the tensor network provides an explicit isometric map from $\Lambda$ into $\Phi$. This consistency condition is illustrated in figure \ref{fig:LogicalPhysical}. The constraint eq.(\ref{eq:chi=equals=Lambda})  ensures that the geometrical entanglement entropy $S(\Lambda)$ is compatible with the Bekenstein-Hawking entropy $|\Lambda|$ of a black hole with event horizon at $\Lambda$, as we should expect when the microstates of the black hole are maximally entangled with a reference system. 
If several such black holes approach one another, they must coalesce into a larger black hole in order to enforce  eq.(\ref{eq:chi=equals=Lambda}). See figure \ref{fig:LogicalPhysical}.
\begin{figure}[ht]
\begin{center}
\includegraphics[width=\columnwidth]{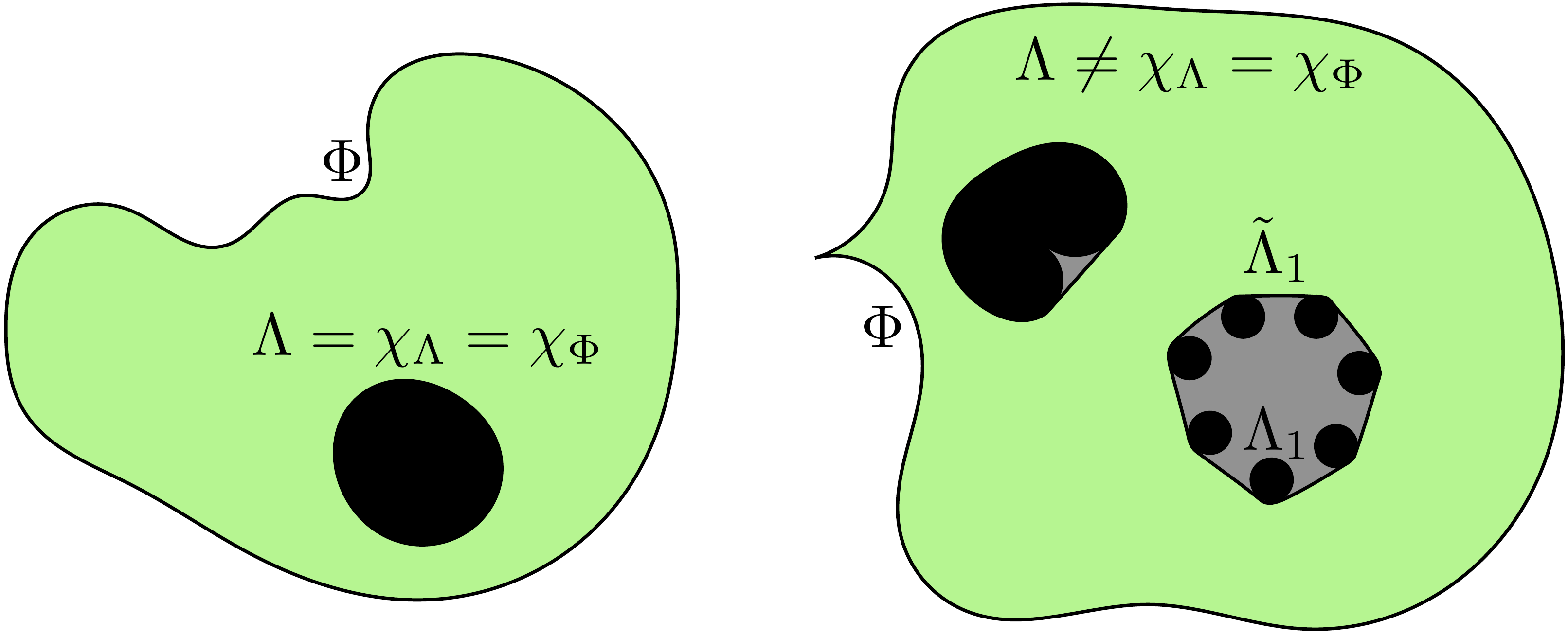}
\caption{
This figure illustrates the necessary condition $\chi_\Lambda= \Lambda$ for the interior boundary of a Riemannian manifold $B$ to be identified as a logical system.  
In both panes, the physical Hilbert space $\cH$ resides on the exterior boundary  $\Phi$ of $B$, and $\Lambda$ is the boundary of the punctures in the bulk, which are shaded in black.  
The green region is the entanglement wedge $\cE[\Phi] $, bounded by $\Phi$ and the minimal surface $\chi_\Phi= \chi_\Lambda$ separating $\Phi$ from $\Lambda$; the gray region is $B \setminus \cE[\Phi]$. 
For purposes of illustration we assume the bulk metric is Euclidean. 
On the left, we have $\chi_\Lambda = \Lambda$ and the interpretation of $\Lambda$ as a logical system is consistent. On the right, we have $\chi_\Lambda \neq \Lambda$ and two possible reasons for this are illustrated. First, a connected component of  $\Lambda$ may fail to be convex. 
Second, the union $\Lambda_1$ of several connected components of $\Lambda$  may be encapsulated by a surface $\tilde \Lambda_1$ with smaller area than $\Lambda_1$, in which case the logical system resides on $\tilde \Lambda_1$ rather than $\Lambda_1$. 
The emergence of this new logical system is reminiscent of the merging of small black holes to form a larger black hole. 
}\label{fig:LogicalPhysical}
\end{center}
\end{figure}

For a holographic code with punctures, 
%or some other form of logical boundary $\Lambda$, 
we may consider the logical subalgebra associated with a bulk region $X\subseteq B$, where now $X$ may include pieces of $\Lambda$. 
Because we are considering the low-energy regime in which geometric entanglement dominates bulk entanglement and back reaction on the geometry due to bulk fields is negligible, we ignore the contribution of bulk points to the logical subalgebra, just as in eq.(\ref{eq:k-over-n}).
Therefore, if $X$ does intersect with $\Lambda$, then the effective size of the logical system is given by
\begin{align}
k_X = |\Lambda \cap X|,
\end{align}
the area of the portion of $\Lambda$ contained in $X$.
(In the language of holographic tensor network codes \cite{Pastawski2015}, we are assuming that most bulk tensors carry no logical indices, so that nearly all of the bulk logical indices contained in the bulk region $X$ are located on the logical boundary.) 

For bulk region $X$, we may consider a reference system $T$ which is maximally entangled with the logical subsystem residing in $X\subseteq B$. 
Then when we say that $\cA_X$ can be reconstructed on boundary region $R\subseteq \Phi$, we mean that $R$ contains a subsystem which is maximally entangled with $T$. 
If we apply the entanglement wedge hypothesis to a manifold $B$ with a logical boundary, we can use the same reasoning as in \S \ref{subsec:entanglement-wedge} to obtain a geometrical expression for the price of the logical algebra:
\begin{align}
p(\cA_X) \assign \min_{R \subseteq \Phi: X\subseteq \mathcal{E}[R] } |R|,
\end{align}
On the other hand, a boundary region $R\subseteq{\Phi}$ will be correctable with respect to $\cA_X$ if $\cA_X$ is supported on the physical complement $\Phi\setminus R$ of $R$, and the expression for distance becomes
\begin{align}
d(\cA_X) \assign \min_{R \subseteq \Phi: X\not \subseteq\mathcal{E}[\Phi \setminus R]} |R|. 
\end{align}
It is important to notice that the minimization in $d(\cA_X)$ is over $X$ not contained in the entanglement wedge of the physical complement, rather than the boundary complement, of $R$. In particular, when $X$ is a single point $\{x\}$ in the bulk, the expressions for price and distance are not identical because the entanglement wedges $\cE[R]$ and $\cE[\Phi\setminus R]$ are not complementary regions of the bulk. Therefore Lemma \ref{lemma:Complementarity} does not apply to the case of a bulk manifold $B$ with logical boundaries. 

The geometrical interpretations for price and distance of $\cA_X$ allow us to prove a version of the strong quantum Singleton bound that applies to subalgebras with nonvanishing $k_X$ in the continuum limit. This will be explained in \S \ref{sec:holographic-singleton}.

\section{Negative curvature and uberholography}\label{sec:uberholography}

Next we discuss a general property of holographic codes defined on bulk manifolds with asymptotically uniform negative curvature, which we call {\it uberholography}. The essence of uberholography is that both the distance and price of a logical subalgebra scale sublinearly with the length $n$ of the holographic code. In the formal continuum limit $n\to \infty$, the logical subalgebra can be supported on a fractal subset of the boundary, with fractal dimension strictly less than the dimension of the boundary. This fractal dimension is a universal feature of the code, in the sense that it does not depend on which logical subalgebra we consider. Uberholography is intriguing, as it suggests that $(D+1)$-dimensional bulk geometry can emerge, not just from an underlying $D$-dimensional system, but also from a system of even lower dimension. 

Though uberholography applies more generally, to be concrete we consider the bulk to have a two-dimensional hyperbolic geometry with radius of curvature $L$.  Now the boundary is one-dimensional, and the minimal ``surface'' $\chi_R$ associated with connected boundary region $R$ is really a bulk geodesic, whose ``area'' is actually the geodesic's length.  For our purpose we need to know only one feature of the bulk geometry: For an interval $R$ on the boundary with length $|R|$, the length of the bulk geodesic $\chi_R$ separating $R$ from its boundary complement is
\begin{align}\label{eq:chi-hyperbolic}
|\chi_R| = 2L \log(|R|/a)
\end{align}
Here $a$ is a short-distance cutoff, which we may think of as a lattice spacing for the boundary theory, so that $|R|/a$ is the number of boundary sites contained in $R$. Applying the Ryu-Takayanagi formula, we conclude that the entanglement entropy $S(R)$ scales logarithmically with the size of $R$, which is the expected result for the vacuum state of a CFT in one spatial dimension. 

For some bulk region $X$, we would like to compute the distance of the logical subalgebra $\cA_X$ associated with $X$. This distance $d(\cA_X)$ is the size of the smallest boundary region $R$ which is not correctable with respect to $X$. Pick a point $x$ in $X$, and choose a connected boundary region $R$ such that $\cE[R]$ contains $x$, but just barely --- if we choose a slightly smaller connected boundary region $R'\subset R$, then $\cE[R']$ will not contain $x$. Since $x\in \cE(R)$, we know that $R$ is not correctable with respect to $\cA_x$, and therefore $d(\cA_X) \le |R|$. We could get a tighter upper bound on $d(\cA_X)$ if we could find a smaller boundary region $R'\subset R$ whose entanglement wedge still contains $x$. There may be no such connected boundary region, but can we  find a disconnected $R'\subset R$ such that $x\in \cE[R']$? 

\begin{figure}[t]
\begin{center}
\includegraphics[width=\columnwidth]{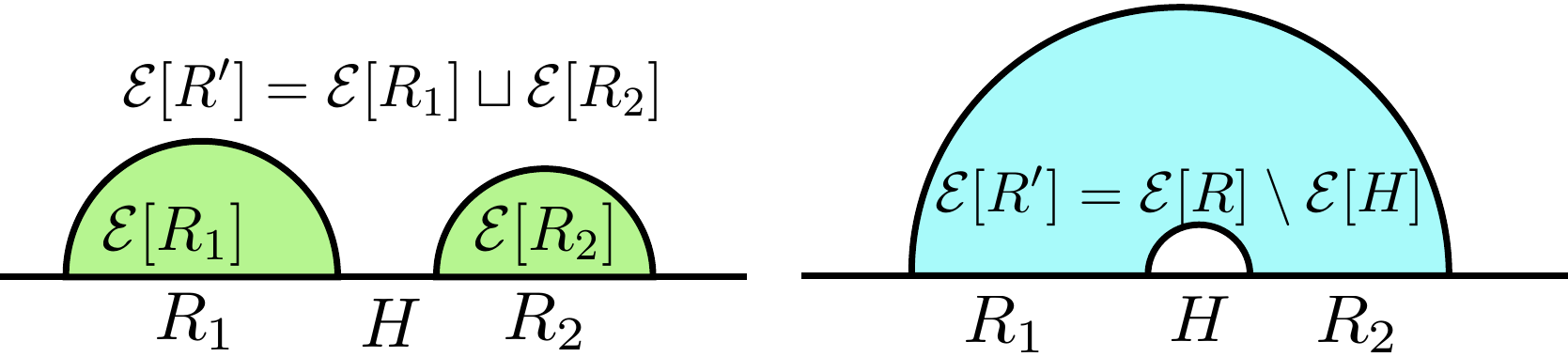}
\caption{
This figure illustrates the two possible geometries for the entanglement wedge $\cE[R']$ of a boundary region $R' = R_1\sqcup R_2$ with two connected components separated by the interval $H$. In the left pane, the minimal surface is $\chi_{R'} = \chi_{R_1} \sqcup \chi_{R_2}$ and the entanglement wedge is $\cE[R'] = \cE[R_1] \sqcup \cE[R_2]$. In the right pane, the minimal surface is $\chi_{R'} = \chi_R \sqcup \chi_H$, where $R= R_1HR_2$, and the entanglement wedge is  $\cE[R'] = \cE[R] \setminus \cE[H]$. 
}\label{fig:EntanglementWedge}
\end{center}
\end{figure}

Let's try punching a hole in $R$. That is, we divide $R$ into three consecutive disjoint intervals $R_1 H R_2$, where 
\begin{align}
|R_1| = |R_2| = \left(\frac{r}{2}\right)|R|, \quad |H| = (1-r)|R|, \quad 0 < r < 1,
\end{align}
and then remove the middle (hole) interval $H$, leaving the disconnected region $R' = R_1 R_2 = R\setminus H$. There are two possible ways to choose bulk geodesics which separate $R'$ from its complement (illustrated in figure \ref{fig:EntanglementWedge}), either $\chi_{R_1} \sqcup \chi_{R_2}$ or $\chi_R \sqcup \chi_H$; the minimal surface $\chi_{R'}$ is the smaller of these two. Thus if 
\begin{align}\label{eq:H-remove-inequality}
|\chi_{R_1}|+|\chi_{R_2}| > |\chi_R| + |\chi_H|,
\end{align}
we'll have
\begin{align}
{\cal E}[R'] = {\cal E}[R] \setminus {\cal E}[H];
\end{align}
removing $H$ from $R$ has the effect of removing ${\cal E}[H]$ from $\cE[R]$. Therefore $\cE[R']$ still contains $x$, and hence $d(\cA_X) \le |R'|$.

If we choose $H$ as large as possible, while respecting eq.(\ref{eq:H-remove-inequality}), then eq.(\ref{eq:chi-hyperbolic}) implies 
\begin{align}
|R_1|\cdot |R_2| = |R|\cdot |H| \implies r^2/4 = (1-r),
\end{align}
which is satisfied by 
\begin{align}
r/2 = \sqrt{2} - 1.
\end{align}
Each connected component of $R'$ is smaller than $R$ by this factor.

Now we repeat this construction recursively. In each round of the procedure we start with a disconnected region $\tilde R$ such that $\cE[\tilde R]$ contains $x$, where $\tilde R$ is the union of many connected components of equal size. Then we punch a hole out of each connected component to obtain a new region $\tilde R'$ such that $\cE[\tilde R']$ still contains $x$. Punching the holes increases the number of connected components by a factor of 2, and reduces the size of each component by the factor $r/2$. 

The procedure halts when the connected components are reduced in size to the lattice spacing $a$, which occurs after $m$ rounds, where 
\begin{align}
a = (r/2)^m |R|.
\end{align}
The remaining region $R_{\rm min}$ has $2^m$ components, each containing one lattice site. so that 
\begin{equation}\label{eq:fractal-bound}
d(\cA_X) \le |R_{\rm min}|/a = 2^m = (|R|/a)^\alpha,
\end{equation}
where
\begin{align}
\alpha =  \frac{\log 2}{\log (2/r)} = \frac{1}{\log_2(\sqrt{2}+1)} \approx .786.
\end{align}
The initial interval $R$ is surely no larger than the whole boundary, so the distance is bounded above by $n^\alpha$ for any logical subalgebra, where $d$ and $n$  are expressed as a number of boundary sites (rather than length along the boundary). 

We can also consider codes with punctures in the bulk. To be specific, suppose $B$ is a hyperbolic disk of proper radius $r_{\rm out}$, with a single puncture at the center of radius $r_{\rm in}$. The code length $n$ is proportional to the circumference of the outer boundary, and the size $k$ of the logical system is proportional to the circumference of the inner boundary. Because the circumference of a circle with radius $r$ is $2\pi L e^{r/L}$, the rate of the code is 
\begin{align}
k/n = e^{(r_{\rm in} - r_{\rm out})/L}.
\end{align}
We may choose an interval $R$ on the boundary, such that $\chi_R$ is tangent to the inner boundary at a single point. The length of this geodesic is essentially twice the difference between the inner and outer boundaries, so that eq.(\ref{eq:chi-hyperbolic}) implies
\begin{align}
r_{\rm out} - r_{\rm in} = L \log (|R|/a)
\end{align}
Using the recursive construction to repeatedly carve holes out of $R$, we obtain the bound eq.(\ref{eq:fractal-bound}) on the code distance, which becomes
\begin{align}
d \le (|R|/a)^\alpha = \left(e^{(r_{\rm out} - r_{\rm in})/L}\right)^\alpha = (n/k)^\alpha
\end{align}
(with the code distance expressed as a number of boundary sites). This scaling of the code distance, with $\alpha \approx .786$, compares favorably with the bound \cite{Bravyi2010b} on local commuting projector codes defined on a two-dimensional Euclidean lattice, for which $\alpha = 1/2$.

The scaling $p(\cA_X) \sim n^\alpha$ applies to price as well as distance. Once we have found a sufficiently large boundary region $R$ such that $\cE[R]$ contains the bulk region $X$, we can proceed to hollow out $R$ recursively until we reach the much smaller region $R_{\rm min}$ such that $|R_{\rm min}|/a = (|R|/a)^\alpha$ where $\cE[R_{\rm min}]$ still contains $X$, and hence $\cA_X$ is supported on $R_{\rm min}$. The resulting region $R_{\rm min}$, with fractal dimension $\alpha$, has a geometry reminiscent of the Cantor set, as illustrated in figure \ref{fig:CantorWedge}. 

\begin{figure}[ht]
\begin{center}
\includegraphics[width=0.8\columnwidth]{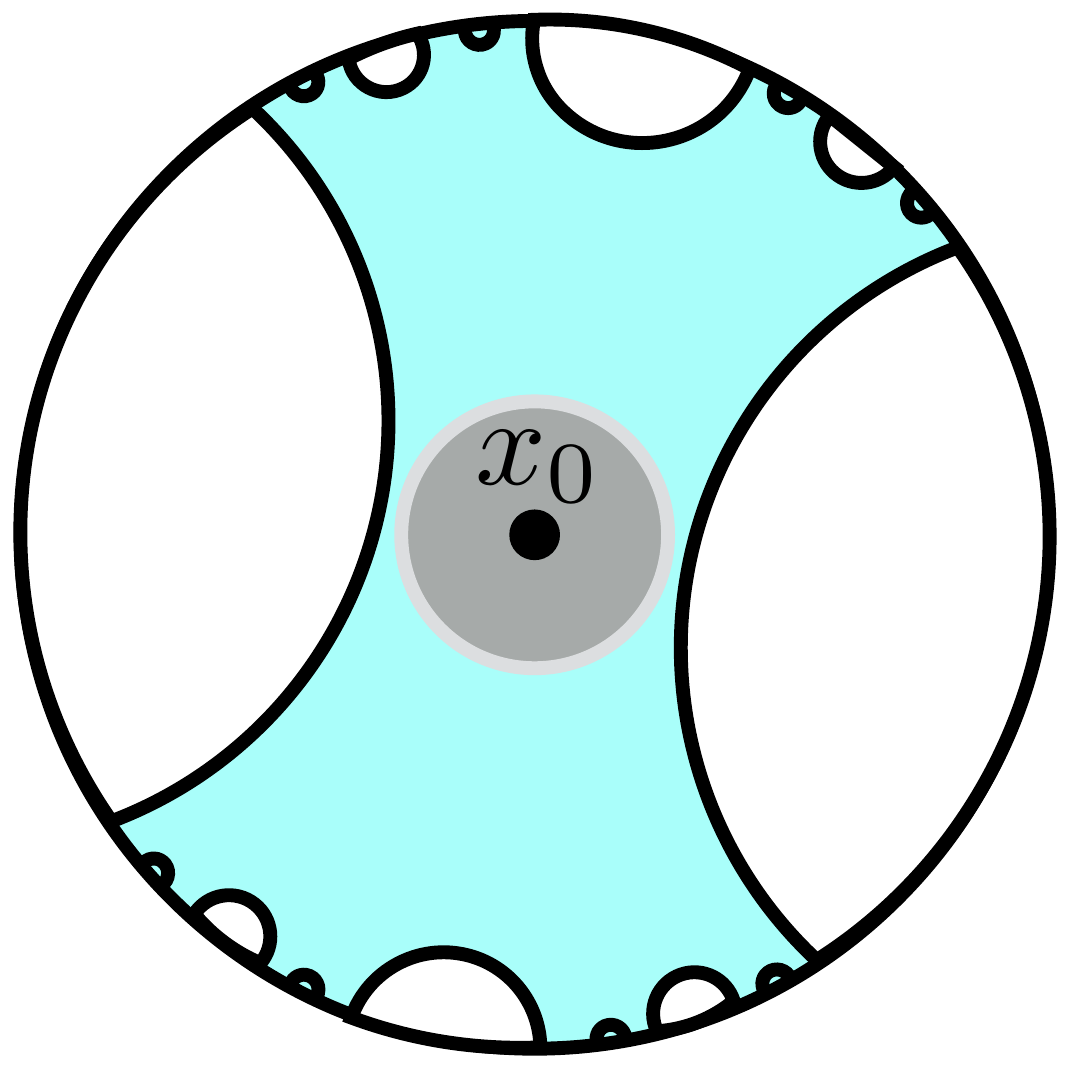}
\caption{
This figure illustrates uberholography for the case of a two-dimensional hyperbolic bulk geometry. The inner logical boundary is contained inside the entanglement wedge, shaded in blue, of a boundary region $R$. By repeatedly punching holes of decreasing size out of this boundary region, we obtain a much smaller region $R_{\rm min}$ whose entanglement wedge still contains the logical boundary. Thus the logical algebra is supported on a fractal boundary set, whose geometry is reminiscent of the Cantor set. 
}\label{fig:CantorWedge}
\end{center}
\end{figure}

It is interesting to compare this universal exponent $\alpha$ for planar uberholography with  the scaling laws for distance and price realized by other code families, such as holographic tensor network codes and concatenated quantum codes. In fact it can be quite challenging to obtain tight lower bounds on distance and price for tensor network code constructions; see Appendix \ref{app:Comparison} for further discussion.

\section{Quantum Markov condition and local correctability}\label{sec:quantum-Markov}

For a holographic code, consider (as in \S \ref{sec:uberholography}) a connected region $R= R_1 H R_2$ which is the disjoint union of three adjoining intervals. Imagine that the middle interval $H$ is erased. If $H$ is correctable, there is a recovery map $\cR$ which corrects this erasure error. But now we ask whether a stronger condition is satisfied: Is it possible to choose a recovery map taking $R'=R_1R_2$ to $R$, so that $\cR^{R'\to R'H}$ ``fills in'' the erased hole $H$? If the erasure of $H$ can be corrected by a map which acts only on a somewhat larger region containing $H$ (larger by a constant factor independent of system size), then we say that erasure is {\it locally correctable}.

The {\it quantum Markov condition} provides a criterion for local correctability \cite{Flammia2016}.  
We say that the state $\rho_{ABC}$ of three disjoint regions $A$, $B$, $C$ obeys the  quantum Markov condition (also called {\it quantum conditional independence}) if
\begin{align}\label{eq:CMI-zero}
0= I(A;C|B) = S(AB) + S(BC) - S(ABC) - S(B),
\end{align}
which is equivalent to saying that the strong subadditivity inequality is saturated (satisfied as an equality). 
If the Markov condition is satisfied, then $\rho_{ABC}$ can be reconstructed from the marginal state $\rho_{AB}$ using a map ${\cal R}^{B\to BC}$ which maps $B\to BC$:
\begin{align}
{\cal R}^{B\to BC}: \rho_{AB} \mapsto \rho_{ABC},
\end{align}
known as Petz recovery map \cite{Petz1988}.
See Ref. \cite{Fawzi2015} for a construction of a map which is robust to condition \eqref{eq:CMI-zero} holding only approximately.
Likewise, in view of the symmetry of the condition under interchange of $A$ and  $C$, $\rho_{ABC}$ can be reconstructed from $\rho_{BC}$ by a map from $B$ to $AB$. 

In fact, eq.(\ref{eq:CMI-zero}) implies that $B$ has a decomposition as a direct sum of tensor products of Hilbert spaces
\begin{align}
\mathcal{H}_B = \bigoplus_j \mathcal{H}_{B_j} =\bigoplus_j \mathcal{H}_{B_j^L}\otimes \mathcal{H}_{B_j^R},
\end{align}
and that the state of $ABC$ has the block diagonal form
\begin{align}
\rho_{ABC} = \bigoplus_j p_j ~\rho_{A B_j^L}\otimes \rho_{B_j^R C}.
\label{eq:quantum-conditional-independence}
\end{align}
Evidently, we can recover $\rho_{ABC}$ from $\rho_{AB}$ by replacing each $\rho_{B_j^R}$ by $\rho_{B_j^R C}$, without touching the system $A$.  

To apply the Markov condition to our holographic setting, consider a holographic code with no punctures, where the state of the physical boundary is pure. We choose $A$, $B$, $C$ to be three disjoint regions whose union is the complete boundary, namely 
\begin{align}
A = R^c, \quad B = R', \quad C = H , 
%\quad AB = H^c = R'R^c, \quad BC = R = H R',
\end{align}
where $R^c$ denotes the boundary region complementary to $R$.
%, and $H^c$ denotes the complement of $H$
Because the state of the complete boundary is pure, $S(ABC)=0$ and $S(H)=S(H^c)$; therefore the condition eq.(\ref{eq:CMI-zero}) becomes 
\begin{align}\label{eq:markov-hole}
S(AB) + S(BC) = S(B) \iff S(H) + S(R) = S(R').
\end{align}
When this condition is satisfied, $R'$ can be divided into two subsystems, where one purifies the state of $H$ and the other purifies the state of $R^c$. To correct the erasure of $H$ we need only restore the entanglement between $R'$ and $H$, and for this purpose there is no need to venture outside $R$. 

\subsection{Hyperbolic bulk}
Using the Ryu-Takayanagi formula, this statement eq.(\ref{eq:markov-hole}) about entropy would follow from a statement about minimal surfaces:
\begin{equation}
\chi_{R'} = \chi_H \sqcup \chi_R, 
\end{equation}
which is the same as the condition (discussed in \S \ref{sec:uberholography}) for the entanglement wedge $\cE]R']$ to be $\cE[R]\setminus \cE[H]$. For the case where the bulk is a hyperbolic disk, then, the calculation in \S \ref{sec:uberholography} shows that erasure of $H$ can be corrected by a recovery map which acts on region $R$ containing $H$, where $|R|/|H| = (1-r)^{-1} = 3+2\sqrt{2} \approx 5.828 $. Thus the erasure error is locally correctable. This local correctability is a general feature of holographic codes with asymptotically uniform negative bulk curvature. 

We may also consider the case of a manifold with punctures, where the logical boundary $\Lambda$ is maximally entangled with a reference system. In that case the entropy of the physical boundary matches $S(\Lambda)$, and the Markov condition is satisfied provided that 
\begin{align}
|\chi_{R'}| +| \chi_\Lambda | = |\chi_{H^c} |+ |\chi_R|,
\end{align}
which holds if
\begin{align}\label{eq:markov-puncture}
\chi_{R'} = \chi_H \sqcup \chi_R, \quad \chi_{H^c} = \chi_H \sqcup \chi_\Lambda.
\end{align}
As for the case without punctures, eq.(\ref{eq:markov-puncture}) will be satisfied if $H$ is a sufficiently small interval on the physical boundary of the hyperbolic disk, and $R$ is an interval containing $H$, where $|R|$ is larger than $|H|$ by a constant factor. 
The interpretation is the same as before; eq.(\ref{eq:markov-puncture}) implies that $R'$ contains a subsystem which purifies $H$, so there is no need to reach outside of $R$ to recover from the erasure of $H$.

It is also notable that if the Markov condition eq.(\ref{eq:CMI-zero}) is {\it approximately} satisfied, then a local recovery map can be constructed which approximately corrects the erasure of $H$. 
This is important because in realistic AdS/CFT the Markov condition is not exactly satisfied due to the small corrections to the Ryu-Takayanagi formula which we have neglected. 
Local correctability in the approximate setting has been discussed recently in \cite{Flammia2016,Pastawski2016,Kim2016}.

% New
While holographic codes based on tensor networks can successfully reproduce the Ryu-Takayanagi relation satisfied by the Von Neumann entanglement entropy of the boundary theory \cite{Pastawski2015, Hayden2016}, they do not capture correctly the properties of R{\'e}nyi entropies \cite{Headrick2010, dong2016gravity, Dong2016Shape}, and therefore do not provide a fully satisfactory description of conformal field theories with dual geometries. It is fortunate that the Markov condition eq.(\ref{eq:CMI-zero}), and its approximate version \cite{Sutter2016}, are stated in terms of Von Neuman entanglement entropies. We therefore expect that holographic tensor network codes can provide a reasonable picture of local correctability in realistic holography.

\subsection{Flat bulk}
The criterion for local correctability is satisfied by generic negatively curved bulk geometries, but not by bulk geometries which are flat or positively curved. Consider for example a Euclidean two-dimensional disk with unit radius. For an interval $R$ on the boundary which subtends angle $\theta$, the geodesic $\chi_R$ is a chord of the boundary circle with length $|\chi_R| = 2  \sin(\theta/2)$. Suppose we erase a hole $H$ which subtends angle $2\delta$, and wish to correct the erasure by acting in a larger region $R$ that contains $H$. If $R=R_1HR_2$ subtends angle $2\phi$, where $|R_1|=|R_2|$, the Markov condition can be satisfied only if
\begin{align}
&|\chi_{R_1}|+|\chi_{R_2}|= 4 \sin\left((\phi - \delta)/2 \right) \notag\\
\ge& |\chi_R| + |\chi_H| = 2 \sin(\phi) + 2 \sin(\delta).
\end{align}
If $\delta$ and $\phi$ are small, this condition becomes, to leading order in small quantities,
\begin{equation}
\delta \le \phi^3/16.
\end{equation}
Thus when $|H|$ is small, $|R| \sim |H|^{1/3}$ is far larger; the erasure is not locally correctable. The same will be true, even more so, for a positively curved bulk geometry.

The failure of local correctability for holographic codes associated with flat and positively curved bulk manifolds suggests that in these cases the physics of the boundary system is highly nonlocal; in particular, the boundary state is not likely to be the ground state of a local Hamiltonian. This conclusion is reinforced by the observation that, according to the Ryu-Takayanagi formula, the entanglement entropy of a small connected region on the boundary of a flat ball scales linearly with the boundary volume of the region; this  strong violation of the entanglement area law  would not be expected in the ground state if the Hamiltonian is local. 

That flat bulk geometry implies nonlocal boundary physics also teaches a valuable lesson about AdS/CFT. A holographic tensor network provides not just an isometric map from the logical boundary $\Lambda$ to the physical boundary $\Phi$ of the manifold $B$, but also a map from $\Lambda$ to the boundary $\partial X$  of a bulk region $X$  which contains $\Lambda$. If the bulk geometry of $X$ is flat or nearly flat, the entanglement structure of holographic codes indicates that the system supported on $\partial X$ should exhibit flagrant violations of bulk locality. This picture suggests that black holes in the bulk which are small compared to the AdS curvature scale ought to  have highly nonlocal dynamics, as seems necessary for these small black holes to be fast scramblers of quantum information \cite{Hayden2007,Sekino2008}.

\subsection{Positively curved bulk}
This nonlocality of boundary physics is even more pronounced for holographic codes defined on positively curved manifolds. Consider the extreme case of a two-dimensional hemisphere $B$, with the boundary $\partial B$ at its equator. Geodesics on the sphere are great circles, lying  in a plane which passes through the sphere's center. 
For any boundary region $R$ with $|R|\ne |\partial B|/2$, there is a unique minimal surface $\chi_R$, which lies in $\partial B$; for $|R| < |\partial B|/2$ we have $\chi_R = R$, while for $|R| > |\partial B|/2$ we have $\chi_R = \partial B \setminus R$. Invoking the Ryu-Takayanagi formula, we see that as $R$ increases in size, the entropy $S(R)$ rises linearly as $|R|$ until $R$ occupies half of the boundary, and then decreases linearly thereafter. This behavior is the same as for a Haar-random pure state \cite{Page1993}. 

For $|R| < |\partial B|/2$ the entanglement wedge $\cE[R]$ contains only $R$, while for $|R| >|\partial B|/2$, we have $R \sqcup \chi_R = \partial B$, and the entanglement wedge $\cE[R]$ is all of the hemisphere $B$. Accordingly, for any region $X$ in the bulk with associated bulk logical algebra $\cA_X$, the price $p(\cA_X)$ and distance $d(\cA_X)$ are both given by $|\partial B|/2$. This also mimics the behavior of a Haar-random pure state. 

If we regulate bulk and boundary by introducing a lattice spacing $a$, then the number $n$ of boundary sites in the code block is
\begin{align}
n = |\partial B| / a= 2\pi r/a
\end{align}
where $r$ is the radius of the sphere. It is noteworthy that the area of the hemisphere, expressed in lattice units, is
\begin{align}
|B|/a^2 = 2 \pi r^2 / a^2 = n^2/2\pi,
\end{align}
which is quadratic in $n$. 
Since the hemisphere is the surface of smallest area whose boundary reproduces the entanglement structure of a Haar-random state, it is tempting to interpret the area $n^2/2\pi$ as a measure of the circuit complexity of preparing this state, in accord with the {\it complexity equals action conjecture} \cite{Brown2016}. Indeed, a random geometrically local circuit in the bulk containing $O(n^2)$ gates can closely approximate a unitary 2-design, which prepares a state with the desired properties \cite{Brandao2016}. 

As noted in \S \ref{subsec:beyond-finite}, we can crudely model a black hole in the bulk by punching a hole in the bulk, with the microstates of the black hole residing on the logical boundary $\Lambda$ of the puncture. If we introduce a reference system which is maximally entangled with $\Lambda$, then the black hole microstates are maximally mixed, and the entanglement  entropy $S(\Lambda) = |\Lambda|$ counts these microstates, in agreement with the black hole's Bekenstein-Hawking entropy. Alternatively, we might wish to describe a black hole in a typical pure state, rather than a highly mixed state. Our observations about the properties of a holographic code defined on a hemisphere suggest how this can be done. Instead of entangling its boundary with a reference system, we fill the puncture with hemispherical cap. The tensor network filling this cap realizes the minimal geometrically local procedure for preparing the black hole's highly scrambled pure state. 

\section{The holographic strong quantum Singleton bound}\label{sec:holographic-singleton}

In \S \ref{subsec:distance-price} we discussed the strong quantum Singleton bound, Corollary \ref{coro:strong-singleton}, which relates $p$, $d$, and $k$ for a code subspace, and we left open whether this bound can be extended to more general logical operator algebras. Here we will see that for holographic codes such an extension is possible. 

We consider the case of a holographic code with punctures in the bulk; hence there is a physical boundary $\Phi$ and a logical boundary $\Lambda$ as discussed in \S \ref{subsec:beyond-finite}. In the formal continuum limit, the code parameters $p$, $d$, and $k$ are measured in units of area, and the contribution to the area from O(1) boundary sites can be neglected; hence eq.(\ref{eq:strong-singleton}) becomes 
\begin{align}
k \le p - d. 
\end{align}
We would like to show that this constraint applies to logical subalgebras of a holographic code. 

We consider a region $X$ in the bulk, and its associated logical subalgebra $\cA_X$. The region $X$ may contain a portion of the logical boundary $\Lambda$, as well as some additional isolated points in the bulk. We denote the intersection $X\cap\Lambda$ of $X$ with the logical boundary by $\Lambda_X$; if $\Lambda_X$ is nonempty, then the bulk points are a negligible portion of the subalgebra $\cA_X$, whose size is therefore
\begin{align}
k_X = |\Lambda_X|.
\end{align}

In what follows, for the sake of clarity, we will denote the minimal surface associated with boundary region $R$ by $\chi(R)$, in place of the subscript notation $\chi_R$ used earlier. We will also use the notation $R^c$ for the {\it physical} complement $\Phi\setminus R$, and  use $p$, $d$, $k$  as a shorthand for $p(\cA_X)$, $d(\cA_X)$, $k_X$.

Let  $R_p$ be a region of the physical boundary $\Phi$ such that $|R_p| = p(\cA_X)$ and  $\cE[R_p]$ contains $X$; this means that the associated minimal surface $\chi({R_p)}$ must contain $\Lambda_X$. Let $R_d \subseteq R_p$ be a subset of $R_p$ such that, in the regulated theory with a nonzero lattice spacing, $R_d$ contains one less boundary site than the distance of $\cA_X$; therefore $R_d$ is surely correctable with respect to $\cA_X$, and in the continuum limit (where a single site has negligible size) $|R_d| = d(\cA_X)$. Because $R_d$ is correctable,  the entanglement wedge of its physical complement $R_d^c$ contains $X$, which means that $\chi(R_d^c)$ contains $\Lambda_X$.

We may consider gradually ``growing'' a boundary region from $R_d$ to $R_p$, obtaining an inequality by observing that the corresponding minimal surface cannot grow faster than its boundary surface itself:
\begin{align}\label{eq:p-d-chi}
|R_p| - |R_d| \ge \int_{R_d}^{R_p} dR ~\frac{d|\chi(R)|}{dR} = |\chi({R_p})| - |\chi({R_d})|,\notag\\
|R_d^c| - |R_p^c| \ge \int_{R_p^c}^{R_d^c} dR ~ \frac{d|\chi(R)|}{dR} = |\chi({R_d^c})| - |\chi({R_p^c})|. \notag
\end{align}
Together with $p-d = |R_p| - |R_d| = |R_d^c| - |R_p^c|$, this implies
\begin{align}
p-d \ge \frac{|\chi({R_p})| - |\chi({R_d})|+|\chi({R_d^c})| - |\chi({R_p^c})|}{2}.
\end{align} 

Now recall that $\chi({R_p})$ contains $\Lambda_X$. Hence the rest of the minimal surface $\chi(R_p)$, excluding $\Lambda_X$, is $\chi(R_p \cup \Lambda_X)$, or in other words
\begin{align}
\chi(R_p) &=  \chi(R_p \cup \Lambda_X) \cup \Lambda_X  \notag \\
\implies|\chi(R_p)| &=  |\chi(R_p \cup \Lambda_X)| + |\Lambda_X|.
\end{align}
Likewise, $\chi(R_d^c)$ contains $\Lambda_X$, which implies
\begin{align}
|\chi(R_d^c)| =  |\chi(R_d^c \cup \Lambda_X)| + |\Lambda_X|.
\end{align}
Plugging into eq.(\ref{eq:p-d-chi}) yields
\begin{align}\label{eq:p-d-cup}
&p-d - |\Lambda_X| = p-d -k \ge \\
& \frac{1}{2}\left(|\chi({R_p\cup\Lambda_X})|+ |\chi({R_d^c\cup \Lambda_X})|- |\chi({R_d})| - |\chi({R_p^c})|\right).\notag
\end{align}

Now we can use the property that two complementary boundary regions share the same minimal bulk surface (where here by the ``complement'' we mean the boundary complement rather than the physical complement; that is we are simultaneously taking the complement with respect to the logical  and physical  boundaries). 
Let us denote by $\Lambda_X^c$ the complement of $\Lambda_X$ with respect to the {\it logical boundary}, so that $\Lambda=\Lambda_X\Lambda_X^c$. Then,
\begin{align}
\chi(R_d^c \cup \Lambda_X) &= \chi(R_d \cup \Lambda_X^c), \\
\chi(R_p^c) &= \chi(R_p \cup \Lambda_X\Lambda_X^c), 
\end{align}
and hence,
\begin{align}\label{eq:p-d-cup-comp}
p-d -k\ge&|\chi(R_p\cup \Lambda_X)|/2 + |\chi({R_d \cup \Lambda_X^c})|/2 \\
-& |\chi({R_d})|/2 - |\chi(R_p \cup \Lambda_X\Lambda_X^c)|/2.\notag
\end{align}
Using the Ryu-Takayanagi relation between entropy and area, and identifying
\begin{align}
\begin{aligned}
AB &= R_p \cup \Lambda_X,& BC &= R_d \cup \Lambda_X^c, \notag\\  
B&= R_d, & ABC  &= R_p\cup \Lambda_X\Lambda_X^c,
\end{aligned}
\end{align}
the right hand side of eq.(\ref{eq:p-d-cup-comp}) is proportional to 
\begin{align}
S(AB) + S(BC) - S(B) - S(ABC),
\end{align}
which is nonnegative by strong subadditivity of entropy. This completes the holographic proof of the strong quantum Singleton bound:
\begin{theorem}[holographic strong quantum Singleton bound]
Consider a holographic code with logical boundary $\Lambda$, and a logical subalgebra $\cA_X$ associated with bulk region $X$, where $k_X = |X\cap\Lambda|$. Then the price and distance of $\cA_X$ obey
\begin{align}\label{eq:holographic-quantum-singleton}
k_X \le p(\cA_X) - d(\cA_X). 
\end{align}
\end{theorem}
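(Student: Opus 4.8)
The plan is to translate the target inequality $k_X \le p - d$ into a statement about minimal-surface areas via the Ryu--Takayanagi formula, and then recognize that statement as an instance of strong subadditivity of entropy. First I would fix the two boundary regions that witness the price and the distance: choose $R_p\subseteq\Phi$ with $|R_p|=p$ and $X\subseteq\cE[R_p]$, so that the minimal surface $\chi(R_p)$ must contain $\Lambda_X$; and choose a correctable region $R_d\subseteq R_p$ with $|R_d|=d$, so that $X\subseteq\cE[\Phi\setminus R_d]=\cE[R_d^c]$ and hence $\chi(R_d^c)$ also contains $\Lambda_X$.

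The second step is a geometric \emph{growth} estimate: as one enlarges a boundary region, its minimal surface can always be bounded by the old minimal surface with the newly added boundary piece appended, so the minimal-surface area grows no faster than the boundary area, $d|\chi(R)|/dR\le 1$. Integrating this as $R$ grows from $R_d$ to $R_p$ (and, complementarily, as $R^c$ shrinks from $R_d^c$ to $R_p^c$), and using $p-d=|R_p|-|R_d|=|R_d^c|-|R_p^c|$, yields
\begin{align*}
p-d \ge \frac{1}{2}\left(|\chi(R_p)|-|\chi(R_d)|+|\chi(R_d^c)|-|\chi(R_p^c)|\right).
\end{align*}

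Third, I would peel $\Lambda_X$ off the two surfaces that contain it. Since $\chi(R_p)$ and $\chi(R_d^c)$ each split as a disjoint union of $\Lambda_X$ with a residual surface, I can write $|\chi(R_p)|=|\chi(R_p\cup\Lambda_X)|+|\Lambda_X|$ and likewise for $\chi(R_d^c)$; substituting and recalling $k_X=|\Lambda_X|$ isolates $k$ on the left. Then, invoking the fact that boundary-complementary regions --- complementing against \emph{both} the physical and the logical boundary --- share a minimal surface, I replace $\chi(R_d^c\cup\Lambda_X)$ by $\chi(R_d\cup\Lambda_X^c)$ and $\chi(R_p^c)$ by $\chi(R_p\cup\Lambda)$, where $\Lambda_X^c$ is the complement of $\Lambda_X$ within $\Lambda$.

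Finally, using the RT dictionary I would identify $AB=R_p\cup\Lambda_X$, $B=R_d$, $BC=R_d\cup\Lambda_X^c$, and $ABC=R_p\cup\Lambda$, so that the right-hand side becomes proportional to $S(AB)+S(BC)-S(B)-S(ABC)=I(A;C|B)\ge 0$, which is precisely strong subadditivity, completing the bound. I expect the main obstacle to be the growth estimate: making precise that the minimal surface never grows faster than its anchoring boundary, and --- more delicately --- justifying that a correctable region $R_d$ can be chosen \emph{nested inside} the optimal price region $R_p$, since the two optimizations are a priori independent. Once these two geometric facts are secured, the remaining manipulations and the entropic identification are routine.
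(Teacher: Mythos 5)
Your proposal is correct and follows essentially the same route as the paper's proof: the same two witnesses $R_p$ and $R_d \subseteq R_p$, the same surface-growth estimate integrated from $R_d$ to $R_p$ and from $R_p^c$ to $R_d^c$, the same peeling of $\Lambda_X$ off $\chi(R_p)$ and $\chi(R_d^c)$, the same boundary-complementarity substitution, and the same identification of the result with strong subadditivity via $AB = R_p\cup\Lambda_X$, $B=R_d$, $BC = R_d\cup\Lambda_X^c$, $ABC = R_p\cup\Lambda$. The one obstacle you single out as delicate --- nesting the correctable region inside the optimal price region --- is in fact immediate: by the definition of distance, \emph{any} region with fewer sites than $d(\cA_X)$ is automatically correctable, so one simply takes $R_d$ to be an arbitrary subset of $R_p$ containing one site fewer than the distance (possible since $p \ge d$ by the no-free-lunch lemma), whose size equals $d(\cA_X)$ in the continuum limit.
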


It is intriguing that we used strong subadditivity of entropy in this holographic proof which applies to logical subalgebras, while the proof of Corollary \ref{coro:strong-singleton}, which applies to the price and distance of a traditional code subspace, used only subadditivity. We have not found a proof of the strong quantum Singleton bound that applies to logical subalgebras and does not use holographic reasoning; it is an open question whether eq.(\ref{eq:holographic-quantum-singleton})  holds beyond the setting of holographic codes. 

\section{ Discussion and Outlook}\label{sec:DiscussionANdOutlook}

Our studies of holographic codes have only scratched the surface of a deep subject. There is far more to do, including searches, guided by geometrical intuition, for codes with improved parameters, and investigations of the efficiency of decoding.

Regarding the implications of holographic codes for quantum gravity, we have uncovered several hints which may help to steer future research. We have seen that positive curvature of the bulk manifold can improve properties such as the code distance, but at a cost --- increasing distance is accompanied by enhanced nonlocality of the boundary system. The observation that the logical algebra of a bulk point has price equal to distance is a step toward characterizing bulk geometry using algebraic ideas, and we  anticipate further advances in that direction. Uberholography, in bulk spacetimes with asymptotically negative curvature, illustrates how notions from quantum coding can elucidate the emergence of bulk geometry beyond the appearance of just one extra spatial dimension. 

% New
Following \cite{Almheiri2015} and \cite{Harlow2016}, we have discussed boundary reconstruction of bulk physics using the formalism of operator algebra quantum error correction (OAQEC) \cite{Beny2007,Beny2007a}, which captures salient features of holography. To make firmer contact with realistic AdS/CFT, this discussion should be extended to the setting of {\it approximate} OAQEC \cite{Beny2009}. First steps in this direction have already been taken in \cite{Pastawski2016}, a study of approximate erasure correction in the (1+1)-dimensional Ising CFT at very low temperature, and in \cite{Kim2016}, an investigation of approximate  {\it local} correctability in a MERA network, which has polynomially decaying correlations on its boundary.

We are encouraged by recent progress connecting quantum error correction and quantum gravity, but much remains murky. Most obviously, our discussion of the entanglement wedge and bulk reconstruction applies only to static spacetimes or very special spatial slices through dynamical spacetimes. Applying the principles of quantum coding to more general dynamical spacetimes is an important goal, which poses serious unresolved challenges.

%\subparagraph*{Acknowledgements.}
\acknowledgments
FP would like to thank  Nicolas Delfosse, Henrik Wilming and Jens Eisert for helpful discussions and comments.
FP gratefully acknowledges funding provided by the Institute for Quantum Information and Matter, a NSF Physics Frontiers Center with support of the Gordon and Betty Moore Foundation as well as the Simons foundation through the It from Qubit program and the FUB through ERC project (TAQ).
This research was supported in part by the National Science Foundation under Grant No. NSF PHY-1125915

%%
%% Bibliography
%%

%% Either use bibtex (recommended), 
%\bibliography{QECCpropertiesFromAdSCFT}

\begin{thebibliography}{42}%
	\makeatletter
	\providecommand \@ifxundefined [1]{%
		\@ifx{#1\undefined}
	}%
	\providecommand \@ifnum [1]{%
		\ifnum #1\expandafter \@firstoftwo
		\else \expandafter \@secondoftwo
		\fi
	}%
	\providecommand \@ifx [1]{%
		\ifx #1\expandafter \@firstoftwo
		\else \expandafter \@secondoftwo
		\fi
	}%
	\providecommand \natexlab [1]{#1}%
	\providecommand \enquote  [1]{``#1''}%
	\providecommand \bibnamefont  [1]{#1}%
	\providecommand \bibfnamefont [1]{#1}%
	\providecommand \citenamefont [1]{#1}%
	\providecommand \href@noop [0]{\@secondoftwo}%
	\providecommand \href [0]{\begingroup \@sanitize@url \@href}%
	\providecommand \@href[1]{\@@startlink{#1}\@@href}%
	\providecommand \@@href[1]{\endgroup#1\@@endlink}%
	\providecommand \@sanitize@url [0]{\catcode `\\12\catcode `\$12\catcode
		`\&12\catcode `\#12\catcode `\^12\catcode `\_12\catcode `\%12\relax}%
	\providecommand \@@startlink[1]{}%
	\providecommand \@@endlink[0]{}%
	\providecommand \url  [0]{\begingroup\@sanitize@url \@url }%
	\providecommand \@url [1]{\endgroup\@href {#1}{\urlprefix }}%
	\providecommand \urlprefix  [0]{URL }%
	\providecommand \Eprint [0]{\href }%
	\providecommand \doibase [0]{http://dx.doi.org/}%
	\providecommand \selectlanguage [0]{\@gobble}%
	\providecommand \bibinfo  [0]{\@secondoftwo}%
	\providecommand \bibfield  [0]{\@secondoftwo}%
	\providecommand \translation [1]{[#1]}%
	\providecommand \BibitemOpen [0]{}%
	\providecommand \bibitemStop [0]{}%
	\providecommand \bibitemNoStop [0]{.\EOS\space}%
	\providecommand \EOS [0]{\spacefactor3000\relax}%
	\providecommand \BibitemShut  [1]{\csname bibitem#1\endcsname}%
	\let\auto@bib@innerbib\@empty
	%</preamble>
	\bibitem [{\citenamefont {Almheiri}\ \emph {et~al.}(2015)\citenamefont
		{Almheiri}, \citenamefont {Dong},\ and\ \citenamefont
		{Harlow}}]{Almheiri2015}%
	\BibitemOpen
	\bibfield  {author} {\bibinfo {author} {\bibfnamefont {A.}~\bibnamefont
			{Almheiri}}, \bibinfo {author} {\bibfnamefont {X.}~\bibnamefont {Dong}}, \
		and\ \bibinfo {author} {\bibfnamefont {D.}~\bibnamefont {Harlow}},\ }\href
	{\doibase 10.1007/JHEP04(2015)163} {\bibfield  {journal} {\bibinfo  {journal}
			{Journal of High Energy Physics}\ }\textbf {\bibinfo {volume} {2015}},\
		\bibinfo {pages} {163} (\bibinfo {year} {2015})},\ \Eprint
	{http://arxiv.org/abs/1411.7041} {arXiv:1411.7041} \BibitemShut {NoStop}%
	\bibitem [{\citenamefont {Pastawski}\ \emph {et~al.}(2015)\citenamefont
		{Pastawski}, \citenamefont {Yoshida}, \citenamefont {Harlow},\ and\
		\citenamefont {Preskill}}]{Pastawski2015}%
	\BibitemOpen
	\bibfield  {author} {\bibinfo {author} {\bibfnamefont {F.}~\bibnamefont
			{Pastawski}}, \bibinfo {author} {\bibfnamefont {B.}~\bibnamefont {Yoshida}},
		\bibinfo {author} {\bibfnamefont {D.}~\bibnamefont {Harlow}}, \ and\ \bibinfo
		{author} {\bibfnamefont {J.}~\bibnamefont {Preskill}},\ }\href {\doibase
		10.1007/JHEP06(2015)149} {\bibfield  {journal} {\bibinfo  {journal} {Journal
				of High Energy Physics}\ }\textbf {\bibinfo {volume} {2015}},\ \bibinfo
		{pages} {149} (\bibinfo {year} {2015})},\ \Eprint
	{http://arxiv.org/abs/hep-th/1503.06237} {arXiv:hep-th/1503.06237 [quant-ph]}
	\BibitemShut {NoStop}%
	\bibitem [{\citenamefont {Hayden}\ \emph {et~al.}(2016)\citenamefont {Hayden},
		\citenamefont {Nezami}, \citenamefont {Qi}, \citenamefont {Thomas},
		\citenamefont {Walter},\ and\ \citenamefont {Yang}}]{Hayden2016}%
	\BibitemOpen
	\bibfield  {author} {\bibinfo {author} {\bibfnamefont {P.}~\bibnamefont
			{Hayden}}, \bibinfo {author} {\bibfnamefont {S.}~\bibnamefont {Nezami}},
		\bibinfo {author} {\bibfnamefont {X.-L.}\ \bibnamefont {Qi}}, \bibinfo
		{author} {\bibfnamefont {N.}~\bibnamefont {Thomas}}, \bibinfo {author}
		{\bibfnamefont {M.}~\bibnamefont {Walter}}, \ and\ \bibinfo {author}
		{\bibfnamefont {Z.}~\bibnamefont {Yang}},\ }\href {\doibase
		10.1007/JHEP11(2016)009} {\bibfield  {journal} {\bibinfo  {journal} {Journal
				of High Energy Physics}\ }\textbf {\bibinfo {volume} {2016}},\ \bibinfo
		{pages} {9} (\bibinfo {year} {2016})}\BibitemShut {NoStop}%
	\bibitem [{\citenamefont {Kribs}\ \emph {et~al.}(2005)\citenamefont {Kribs},
		\citenamefont {Laflamme},\ and\ \citenamefont {Poulin}}]{Kribs2005}%
	\BibitemOpen
	\bibfield  {author} {\bibinfo {author} {\bibfnamefont {D.W.}~\bibnamefont
			{Kribs}}, \bibinfo {author} {\bibfnamefont {R.}~\bibnamefont {Laflamme}}, \
		and\ \bibinfo {author} {\bibfnamefont {D.}~\bibnamefont {Poulin}},\ }\href
	{\doibase 10.1103/PhysRevLett.94.180501} {\bibfield  {journal} {\bibinfo
			{journal} {Physical Review Letters}\ }\textbf {\bibinfo {volume} {94}},\
		\bibinfo {pages} {180501} (\bibinfo {year} {2005})}\BibitemShut {NoStop}%
	\bibitem [{\citenamefont {B{\'{e}}ny}\ \emph
		{et~al.}(2007{\natexlab{a}})\citenamefont {B{\'{e}}ny}, \citenamefont
		{Kempf},\ and\ \citenamefont {Kribs}}]{Beny2007}%
	\BibitemOpen
	\bibfield  {author} {\bibinfo {author} {\bibfnamefont {C.}~\bibnamefont
			{B{\'{e}}ny}}, \bibinfo {author} {\bibfnamefont {A.}~\bibnamefont {Kempf}}, \
		and\ \bibinfo {author} {\bibfnamefont {D.W.}~\bibnamefont {Kribs}},\ }\href
	{\doibase 10.1103/PhysRevA.76.042303} {\bibfield  {journal} {\bibinfo
			{journal} {Physical Review A}\ }\textbf {\bibinfo {volume} {76}},\ \bibinfo
		{pages} {042303} (\bibinfo {year} {2007}{\natexlab{a}})}\BibitemShut
	{NoStop}%
	\bibitem [{\citenamefont {B{\'{e}}ny}\ \emph
		{et~al.}(2007{\natexlab{b}})\citenamefont {B{\'{e}}ny}, \citenamefont
		{Kempf},\ and\ \citenamefont {Kribs}}]{Beny2007a}%
	\BibitemOpen
	\bibfield  {author} {\bibinfo {author} {\bibfnamefont {C.}~\bibnamefont
			{B{\'{e}}ny}}, \bibinfo {author} {\bibfnamefont {A.}~\bibnamefont {Kempf}}, \
		and\ \bibinfo {author} {\bibfnamefont {D.W.}~\bibnamefont {Kribs}},\ }\href
	{\doibase 10.1103/PhysRevLett.98.100502} {\bibfield  {journal} {\bibinfo
			{journal} {Physical Review Letters}\ }\textbf {\bibinfo {volume} {98}},\
		\bibinfo {pages} {100502} (\bibinfo {year} {2007}{\natexlab{b}})}\BibitemShut
	{NoStop}%
	\bibitem [{\citenamefont {Nielsen}\ and\ \citenamefont
		{Poulin}(2007)}]{Nielsen2007}%
	\BibitemOpen
	\bibfield  {author} {\bibinfo {author} {\bibfnamefont {M.A.}~\bibnamefont
			{Nielsen}}\ and\ \bibinfo {author} {\bibfnamefont {D.}~\bibnamefont
			{Poulin}},\ }\href {\doibase 10.1103/PhysRevA.75.064304} {\bibfield
		{journal} {\bibinfo  {journal} {Physical Review A}\ }\textbf {\bibinfo
			{volume} {75}},\ \bibinfo {pages} {064304} (\bibinfo {year}
		{2007})}\BibitemShut {NoStop}%
	\bibitem [{\citenamefont {Donnelly}\ \emph {et~al.}(2016)\citenamefont
		{Donnelly}, \citenamefont {Michel}, \citenamefont {Marolf},\ and\
		\citenamefont {Wien}}]{Donnelly2016}%
	\BibitemOpen
	\bibfield  {author} {\bibinfo {author} {\bibfnamefont {W.}~\bibnamefont
			{Donnelly}}, \bibinfo {author} {\bibfnamefont {B.}~\bibnamefont {Michel}},
		\bibinfo {author} {\bibfnamefont {D.}~\bibnamefont {Marolf}}, \ and\ \bibinfo
		{author} {\bibfnamefont {J.}~\bibnamefont {Wien}},\ }\href
	{http://arxiv.org/abs/1611.05841} {\  (\bibinfo {year} {2016})},\ \Eprint
	{http://arxiv.org/abs/hep-th/1611.05841} {arXiv:hep-th/1611.05841 [hep-th]}
	\BibitemShut {NoStop}%
	\bibitem [{\citenamefont {Almheiri}\ \emph {et~al.}(2016)\citenamefont
		{Almheiri}, \citenamefont {Dong},\ and\ \citenamefont
		{Swingle}}]{Almheiri2016}%
	\BibitemOpen
	\bibfield  {author} {\bibinfo {author} {\bibfnamefont {A.}~\bibnamefont
			{Almheiri}}, \bibinfo {author} {\bibfnamefont {X.}~\bibnamefont {Dong}}, \
		and\ \bibinfo {author} {\bibfnamefont {B.}~\bibnamefont {Swingle}},\ }\href
	{http://arxiv.org/abs/1606.04537} {\  (\bibinfo {year} {2016})},\ \Eprint
	{http://arxiv.org/abs/hep-th/1606.04537} {arXiv:hep-th/1606.04537 [hep-th]}
	\BibitemShut {NoStop}%
	\bibitem [{\citenamefont {Harlow}(2016)}]{Harlow2016}%
	\BibitemOpen
	\bibfield  {author} {\bibinfo {author} {\bibfnamefont {D.}~\bibnamefont
			{Harlow}},\ }\href {http://arxiv.org/abs/1607.03901} {\  (\bibinfo {year}
		{2016})},\ \Eprint {http://arxiv.org/abs/hep-th/1607.03901}
	{arXiv:hep-th/1607.03901 [hep-th]} \BibitemShut {NoStop}%
	\bibitem [{\citenamefont {Poulin}(2005)}]{Poulin2005}%
	\BibitemOpen
	\bibfield  {author} {\bibinfo {author} {\bibfnamefont {D.}~\bibnamefont
			{Poulin}},\ }\href {\doibase 10.1103/PhysRevLett.95.230504} {\bibfield
		{journal} {\bibinfo  {journal} {Physical Review Letters}\ }\textbf {\bibinfo
			{volume} {95}},\ \bibinfo {pages} {230504} (\bibinfo {year}
		{2005})}\BibitemShut {NoStop}%
	\bibitem [{\citenamefont {Kribs}\ \emph {et~al.}(2006)\citenamefont {Kribs},
		\citenamefont {Laflamme}, \citenamefont {Poulin},\ and\ \citenamefont
		{Lesosky}}]{Kribs2006}%
	\BibitemOpen
	\bibfield  {author} {\bibinfo {author} {\bibfnamefont {D.~W.}\ \bibnamefont
			{Kribs}}, \bibinfo {author} {\bibfnamefont {R.}~\bibnamefont {Laflamme}},
		\bibinfo {author} {\bibfnamefont {D.}~\bibnamefont {Poulin}}, \ and\ \bibinfo
		{author} {\bibfnamefont {M.}~\bibnamefont {Lesosky}},\ }\href
	{http://www.citebase.org/abstract?id=oai:arXiv.org:quant-ph/0504189}
	{\bibfield  {journal} {\bibinfo  {journal} {QUANT.INF.AND COMP.}\ }\textbf
		{\bibinfo {volume} {6}},\ \bibinfo {pages} {383} (\bibinfo {year}
		{2006})}\BibitemShut {NoStop}%
	\bibitem [{\citenamefont {Knill}\ and\ \citenamefont
		{Laflamme}(1997)}]{Knill1997}%
	\BibitemOpen
	\bibfield  {author} {\bibinfo {author} {\bibfnamefont {E.}~\bibnamefont
			{Knill}}\ and\ \bibinfo {author} {\bibfnamefont {R.}~\bibnamefont
			{Laflamme}},\ }\href {\doibase 10.1103/PhysRevA.55.900} {\bibfield  {journal}
		{\bibinfo  {journal} {Physical Review A}\ }\textbf {\bibinfo {volume} {55}},\
		\bibinfo {pages} {900} (\bibinfo {year} {1997})}\BibitemShut {NoStop}%
	\bibitem [{\citenamefont {Maldacena}(1998)}]{Maldacena1998}%
	\BibitemOpen
	\bibfield  {author} {\bibinfo {author} {\bibfnamefont {J.~M.}\ \bibnamefont
			{Maldacena}},\ }\href@noop {} {\bibfield  {journal} {\bibinfo  {journal}
			{Advances in Theoretical and Mathematical Physics}\ }\textbf {\bibinfo
			{volume} {2}},\ \bibinfo {pages} {231} (\bibinfo {year} {1998})}\BibitemShut
	{NoStop}%
	\bibitem [{\citenamefont {Ryu}\ and\ \citenamefont
		{Takayanagi}(2006)}]{Ryu2006}%
	\BibitemOpen
	\bibfield  {author} {\bibinfo {author} {\bibfnamefont {S.}~\bibnamefont
			{Ryu}}\ and\ \bibinfo {author} {\bibfnamefont {T.}~\bibnamefont
			{Takayanagi}},\ }\href {\doibase 10.1103/PhysRevLett.96.181602} {\bibfield
		{journal} {\bibinfo  {journal} {Physical Review Letters}\ }\textbf {\bibinfo
			{volume} {96}},\ \bibinfo {pages} {181602} (\bibinfo {year} {2006})},\
	\Eprint {http://arxiv.org/abs/hep-th/0603001} {arXiv:hep-th/0603001 [hep-th]}
	\BibitemShut {NoStop}%
	\bibitem [{\citenamefont {{Van Raamsdonk}}(2010)}]{VanRaamsdonk2010}%
	\BibitemOpen
	\bibfield  {author} {\bibinfo {author} {\bibfnamefont {M.}~\bibnamefont {{Van
					Raamsdonk}}},\ }\href {\doibase 10.1007/s10714-010-1034-0} {\bibfield
		{journal} {\bibinfo  {journal} {General Relativity and Gravitation}\ }\textbf
		{\bibinfo {volume} {42}},\ \bibinfo {pages} {2323} (\bibinfo {year}
		{2010})}\BibitemShut {NoStop}%
	\bibitem [{\citenamefont {Hubeny}\ \emph {et~al.}(2007)\citenamefont {Hubeny},
		\citenamefont {Rangamani},\ and\ \citenamefont {Takayanagi}}]{Hubeny2007}%
	\BibitemOpen
	\bibfield  {author} {\bibinfo {author} {\bibfnamefont {V.~E.}\ \bibnamefont
			{Hubeny}}, \bibinfo {author} {\bibfnamefont {M.}~\bibnamefont {Rangamani}}, \
		and\ \bibinfo {author} {\bibfnamefont {T.}~\bibnamefont {Takayanagi}},\
	}\href {\doibase 10.1088/1126-6708/2007/07/062} {\bibfield  {journal}
	{\bibinfo  {journal} {Journal of High Energy Physics}\ }\textbf {\bibinfo
		{volume} {2007}},\ \bibinfo {pages} {062} (\bibinfo {year}
	{2007})}\BibitemShut {NoStop}%
\bibitem [{\citenamefont {Czech}\ \emph {et~al.}(2012)\citenamefont {Czech},
	\citenamefont {Karczmarek}, \citenamefont {Nogueira},\ and\ \citenamefont
	{{Van Raamsdonk}}}]{Czech2012}%
\BibitemOpen
\bibfield  {author} {\bibinfo {author} {\bibfnamefont {B.}~\bibnamefont
		{Czech}}, \bibinfo {author} {\bibfnamefont {J.~L.}\ \bibnamefont
		{Karczmarek}}, \bibinfo {author} {\bibfnamefont {F.}~\bibnamefont
		{Nogueira}}, \ and\ \bibinfo {author} {\bibfnamefont {M.}~\bibnamefont {{Van
				Raamsdonk}}},\ }\href {\doibase 10.1088/0264-9381/29/15/155009} {\bibfield
	{journal} {\bibinfo  {journal} {Classical and Quantum Gravity}\ }\textbf
	{\bibinfo {volume} {29}},\ \bibinfo {pages} {155009} (\bibinfo {year}
	{2012})}\BibitemShut {NoStop}%
\bibitem [{\citenamefont {Jafferis}\ and\ \citenamefont
	{Suh}(2016)}]{Jafferis2016}%
\BibitemOpen
\bibfield  {author} {\bibinfo {author} {\bibfnamefont {D.~L.}\ \bibnamefont
		{Jafferis}}\ and\ \bibinfo {author} {\bibfnamefont {S.~J.}\ \bibnamefont
		{Suh}},\ }\href {\doibase 10.1007/JHEP09(2016)068} {\bibfield  {journal}
	{\bibinfo  {journal} {Journal of High Energy Physics}\ }\textbf {\bibinfo
		{volume} {2016}},\ \bibinfo {pages} {68} (\bibinfo {year} {2016})},\ \Eprint
{http://arxiv.org/abs/hep-th/1412.8465} {hep-th/1412.8465} \BibitemShut
{NoStop}%
\bibitem [{\citenamefont {Wall}(2014)}]{Wall2014}%
\BibitemOpen
\bibfield  {author} {\bibinfo {author} {\bibfnamefont {A.}~\bibnamefont
		{Wall}},\ }\href {\doibase 10.1088/0264-9381/31/22/225007} {\bibfield
	{journal} {\bibinfo  {journal} {Classical and Quantum Gravity}\ }\textbf
	{\bibinfo {volume} {31}},\ \bibinfo {pages} {225007} (\bibinfo {year}
	{2014})}\BibitemShut {NoStop}%
\bibitem [{\citenamefont {Dong}\ \emph {et~al.}(2016)\citenamefont {Dong},
	\citenamefont {Harlow},\ and\ \citenamefont {Wall}}]{Dong2016}%
\BibitemOpen
\bibfield  {author} {\bibinfo {author} {\bibfnamefont {X.}~\bibnamefont
		{Dong}}, \bibinfo {author} {\bibfnamefont {D.}~\bibnamefont {Harlow}}, \ and\
	\bibinfo {author} {\bibfnamefont {A.~C.}\ \bibnamefont {Wall}},\ }\href
{\doibase 10.1103/PhysRevLett.117.021601} {\bibfield  {journal} {\bibinfo
		{journal} {Physical Review Letters}\ }\textbf {\bibinfo {volume} {117}},\
	\bibinfo {pages} {021601} (\bibinfo {year} {2016})},\ \Eprint
{http://arxiv.org/abs/hep-th/1601.05416} {arXiv:hep-th/1601.05416}
\BibitemShut {NoStop}%
\bibitem [{\citenamefont {Bao}\ and\ \citenamefont {Kim}(2016)}]{Bao2016}%
\BibitemOpen
\bibfield  {author} {\bibinfo {author} {\bibfnamefont {N.}~\bibnamefont
		{Bao}}\ and\ \bibinfo {author} {\bibfnamefont {I.~H.}\ \bibnamefont {Kim}},\
}\href {http://arxiv.org/abs/1601.07616} {\  (\bibinfo {year} {2016})},\
\Eprint {http://arxiv.org/abs/hep-th/1601.07616} {arXiv:hep-th/1601.07616
	[hep-th]} \BibitemShut {NoStop}%
\bibitem [{\citenamefont {Bousso}(1999)}]{Bousso1999}%
\BibitemOpen
\bibfield  {author} {\bibinfo {author} {\bibfnamefont {R.}~\bibnamefont
		{Bousso}},\ }\href {\doibase 10.1088/1126-6708/1999/07/004} {\bibfield
	{journal} {\bibinfo  {journal} {Journal of High Energy Physics}\ }\textbf
	{\bibinfo {volume} {1999}},\ \bibinfo {pages} {004} (\bibinfo {year}
	{1999})}\BibitemShut {NoStop}%
\bibitem [{\citenamefont {Maldacena}(2003)}]{Maldacena2003}%
\BibitemOpen
\bibfield  {author} {\bibinfo {author} {\bibfnamefont {J.}~\bibnamefont
		{Maldacena}},\ }\href {\doibase 10.1088/1126-6708/2003/04/021} {\bibfield
	{journal} {\bibinfo  {journal} {Journal of High Energy Physics}\ }\textbf
	{\bibinfo {volume} {2003}},\ \bibinfo {pages} {021} (\bibinfo {year}
	{2003})}\BibitemShut {NoStop}%
\bibitem [{\citenamefont {Faulkner}\ \emph {et~al.}(2013)\citenamefont
	{Faulkner}, \citenamefont {Lewkowycz},\ and\ \citenamefont
	{Maldacena}}]{Faulkner2013}%
\BibitemOpen
\bibfield  {author} {\bibinfo {author} {\bibfnamefont {T.}~\bibnamefont
		{Faulkner}}, \bibinfo {author} {\bibfnamefont {A.}~\bibnamefont {Lewkowycz}},
	\ and\ \bibinfo {author} {\bibfnamefont {J.}~\bibnamefont {Maldacena}},\
}\href {\doibase 10.1007/JHEP11(2013)074} {\bibfield  {journal} {\bibinfo
	{journal} {Journal of High Energy Physics}\ }\textbf {\bibinfo {volume}
	{2013}},\ \bibinfo {pages} {74} (\bibinfo {year} {2013})}\BibitemShut
{NoStop}%
\bibitem [{\citenamefont {Maldacena}\ and\ \citenamefont
	{Susskind}(2013)}]{Maldacena2013}%
\BibitemOpen
\bibfield  {author} {\bibinfo {author} {\bibfnamefont {J.}~\bibnamefont
		{Maldacena}}\ and\ \bibinfo {author} {\bibfnamefont {L.}~\bibnamefont
		{Susskind}},\ }\href {\doibase 10.1002/prop.201300020} {\bibfield  {journal}
	{\bibinfo  {journal} {Fortschritte der Physik}\ }\textbf {\bibinfo {volume}
		{61}},\ \bibinfo {pages} {781} (\bibinfo {year} {2013})}\BibitemShut
{NoStop}%
\bibitem [{\citenamefont {Bravyi}\ \emph {et~al.}(2010)\citenamefont {Bravyi},
	\citenamefont {Poulin},\ and\ \citenamefont {Terhal}}]{Bravyi2010b}%
\BibitemOpen
\bibfield  {author} {\bibinfo {author} {\bibfnamefont {S.}~\bibnamefont
		{Bravyi}}, \bibinfo {author} {\bibfnamefont {D.}~\bibnamefont {Poulin}}, \
	and\ \bibinfo {author} {\bibfnamefont {B.}~\bibnamefont {Terhal}},\ }\href
{\doibase 10.1103/PhysRevLett.104.050503} {\bibfield  {journal} {\bibinfo
		{journal} {Physical Review Letters}\ }\textbf {\bibinfo {volume} {104}},\
	\bibinfo {pages} {050503} (\bibinfo {year} {2010})},\ \Eprint
{http://arxiv.org/abs/quant-ph/0909.5200} {arXiv:quant-ph/0909.5200}
\BibitemShut {NoStop}%
\bibitem [{\citenamefont {Flammia}\ \emph {et~al.}(2016)\citenamefont
	{Flammia}, \citenamefont {Haah}, \citenamefont {Kastoryano},\ and\
	\citenamefont {Kim}}]{Flammia2016}%
\BibitemOpen
\bibfield  {author} {\bibinfo {author} {\bibfnamefont {S.~T.}\ \bibnamefont
		{Flammia}}, \bibinfo {author} {\bibfnamefont {J.}~\bibnamefont {Haah}},
	\bibinfo {author} {\bibfnamefont {M.~J.}\ \bibnamefont {Kastoryano}}, \ and\
	\bibinfo {author} {\bibfnamefont {I.~H.}\ \bibnamefont {Kim}},\ }\href
{http://arxiv.org/abs/1610.06169} {\  (\bibinfo {year} {2016})},\ \Eprint
{http://arxiv.org/abs/quant-ph/1610.06169} {arXiv:quant-ph/1610.06169
	[quant-ph]} \BibitemShut {NoStop}%
\bibitem [{\citenamefont {Petz}(1988)}]{Petz1988}%
\BibitemOpen
\bibfield  {author} {\bibinfo {author} {\bibfnamefont {D.}~\bibnamefont
		{Petz}},\ }\href@noop {} {\bibfield  {journal} {\bibinfo  {journal} {The
			Quarterly Journal of Mathematics}\ }\textbf {\bibinfo {volume} {39}},\
	\bibinfo {pages} {97} (\bibinfo {year} {1988})}\BibitemShut {NoStop}%
\bibitem [{\citenamefont {Fawzi}\ and\ \citenamefont
	{Renner}(2015)}]{Fawzi2015}%
\BibitemOpen
\bibfield  {author} {\bibinfo {author} {\bibfnamefont {O.}~\bibnamefont
		{Fawzi}}\ and\ \bibinfo {author} {\bibfnamefont {R.}~\bibnamefont {Renner}},\
}\href {\doibase 10.1007/s00220-015-2466-x} {\bibfield  {journal} {\bibinfo
	{journal} {Communications in Mathematical Physics}\ }\textbf {\bibinfo
	{volume} {340}},\ \bibinfo {pages} {575} (\bibinfo {year}
{2015})}\BibitemShut {NoStop}%
\bibitem [{\citenamefont {Pastawski}\ \emph {et~al.}(2016)\citenamefont
	{Pastawski}, \citenamefont {Eisert},\ and\ \citenamefont
	{Wilming}}]{Pastawski2016}%
\BibitemOpen
\bibfield  {author} {\bibinfo {author} {\bibfnamefont {F.}~\bibnamefont
		{Pastawski}}, \bibinfo {author} {\bibfnamefont {J.}~\bibnamefont {Eisert}}, \
	and\ \bibinfo {author} {\bibfnamefont {H.}~\bibnamefont {Wilming}},\ }\href
{http://arxiv.org/abs/1611.07528} {\  (\bibinfo {year} {2016})},\ \Eprint
{http://arxiv.org/abs/quant-ph/1611.07528} {arXiv:quant-ph/1611.07528
	[quant-ph]} \BibitemShut {NoStop}%
\bibitem [{\citenamefont {Kim}\ and\ \citenamefont
	{Kastoryano}(2016)}]{Kim2016}%
\BibitemOpen
\bibfield  {author} {\bibinfo {author} {\bibfnamefont {I.~H.}\ \bibnamefont
		{Kim}}\ and\ \bibinfo {author} {\bibfnamefont {M.~J.}\ \bibnamefont
		{Kastoryano}},\ }\href {http://arxiv.org/abs/1701.00050} {\  (\bibinfo {year}
	{2016})},\ \Eprint {http://arxiv.org/abs/1701.00050} {arXiv:1701.00050}
\BibitemShut {NoStop}%
\bibitem [{\citenamefont {Headrick}(2010)}]{Headrick2010}%
\BibitemOpen
\bibfield  {author} {\bibinfo {author} {\bibfnamefont {M.}~\bibnamefont
		{Headrick}},\ }\href {\doibase 10.1103/PhysRevD.82.126010} {\bibfield
	{journal} {\bibinfo  {journal} {Physical Review D}\ }\textbf {\bibinfo
		{volume} {82}},\ \bibinfo {pages} {126010} (\bibinfo {year}
	{2010})}\BibitemShut {NoStop}%
\bibitem [{\citenamefont {Dong}(2016{\natexlab{a}})}]{dong2016gravity}%
\BibitemOpen
\bibfield  {author} {\bibinfo {author} {\bibfnamefont {X.}~\bibnamefont
		{Dong}},\ }\href@noop {} {\bibfield  {journal} {\bibinfo  {journal} {Nature
			Communications}\ }\textbf {\bibinfo {volume} {7}} (\bibinfo {year}
	{2016}{\natexlab{a}})}\BibitemShut {NoStop}%
\bibitem [{\citenamefont {Dong}(2016{\natexlab{b}})}]{Dong2016Shape}%
\BibitemOpen
\bibfield  {author} {\bibinfo {author} {\bibfnamefont {X.}~\bibnamefont
		{Dong}},\ }\href {\doibase 10.1103/PhysRevLett.116.251602} {\bibfield
	{journal} {\bibinfo  {journal} {Physical Review Letters}\ }\textbf {\bibinfo
		{volume} {116}},\ \bibinfo {pages} {251602} (\bibinfo {year}
	{2016}{\natexlab{b}})}\BibitemShut {NoStop}%
\bibitem [{\citenamefont {Sutter}\ \emph {et~al.}(2016)\citenamefont {Sutter},
	\citenamefont {Fawzi},\ and\ \citenamefont {Renner}}]{Sutter2016}%
\BibitemOpen
\bibfield  {author} {\bibinfo {author} {\bibfnamefont {D.}~\bibnamefont
		{Sutter}}, \bibinfo {author} {\bibfnamefont {O.}~\bibnamefont {Fawzi}}, \
	and\ \bibinfo {author} {\bibfnamefont {R.}~\bibnamefont {Renner}},\
}\href@noop {} {\bibfield  {journal} {\bibinfo  {journal} {Proceedings of the
		Royal Society of London A: Mathematical, Physical and Engineering Sciences}\
}\textbf {\bibinfo {volume} {472}} (\bibinfo {year} {2016})}\BibitemShut
{NoStop}%
\bibitem [{\citenamefont {Hayden}\ and\ \citenamefont
	{Preskill}(2007)}]{Hayden2007}%
\BibitemOpen
\bibfield  {author} {\bibinfo {author} {\bibfnamefont {P.}~\bibnamefont
		{Hayden}}\ and\ \bibinfo {author} {\bibfnamefont {J.}~\bibnamefont
		{Preskill}},\ }\href {\doibase 10.1088/1126-6708/2007/09/120} {\bibfield
	{journal} {\bibinfo  {journal} {Journal of High Energy Physics}\ }\textbf
	{\bibinfo {volume} {2007}},\ \bibinfo {pages} {120} (\bibinfo {year}
	{2007})}\BibitemShut {NoStop}%
\bibitem [{\citenamefont {Sekino}\ and\ \citenamefont
	{Susskind}(2008)}]{Sekino2008}%
\BibitemOpen
\bibfield  {author} {\bibinfo {author} {\bibfnamefont {Y.}~\bibnamefont
		{Sekino}}\ and\ \bibinfo {author} {\bibfnamefont {L.}~\bibnamefont
		{Susskind}},\ }\href {\doibase 10.1088/1126-6708/2008/10/065} {\bibfield
	{journal} {\bibinfo  {journal} {Journal of High Energy Physics}\ }\textbf
	{\bibinfo {volume} {2008}},\ \bibinfo {pages} {065} (\bibinfo {year}
	{2008})}\BibitemShut {NoStop}%
\bibitem [{\citenamefont {Page}(1993)}]{Page1993}%
\BibitemOpen
\bibfield  {author} {\bibinfo {author} {\bibfnamefont {D.~N.}\ \bibnamefont
		{Page}},\ }\href {\doibase 10.1103/PhysRevLett.71.1291} {\bibfield  {journal}
	{\bibinfo  {journal} {Physical Review Letters}\ }\textbf {\bibinfo {volume}
		{71}},\ \bibinfo {pages} {1291} (\bibinfo {year} {1993})}\BibitemShut
{NoStop}%
\bibitem [{\citenamefont {Brown}\ \emph {et~al.}(2016)\citenamefont {Brown},
	\citenamefont {Roberts}, \citenamefont {Susskind}, \citenamefont {Swingle},\
	and\ \citenamefont {Zhao}}]{Brown2016}%
\BibitemOpen
\bibfield  {author} {\bibinfo {author} {\bibfnamefont {A.~R.}\ \bibnamefont
		{Brown}}, \bibinfo {author} {\bibfnamefont {D.~A.}\ \bibnamefont {Roberts}},
	\bibinfo {author} {\bibfnamefont {L.}~\bibnamefont {Susskind}}, \bibinfo
	{author} {\bibfnamefont {B.}~\bibnamefont {Swingle}}, \ and\ \bibinfo
	{author} {\bibfnamefont {Y.}~\bibnamefont {Zhao}},\ }\href {\doibase
	10.1103/PhysRevLett.116.191301} {\bibfield  {journal} {\bibinfo  {journal}
		{Physical Review Letters}\ }\textbf {\bibinfo {volume} {116}},\ \bibinfo
	{pages} {191301} (\bibinfo {year} {2016})}\BibitemShut {NoStop}%
\bibitem [{\citenamefont {Brand{\~{a}}o}\ \emph {et~al.}(2016)\citenamefont
	{Brand{\~{a}}o}, \citenamefont {Harrow},\ and\ \citenamefont
	{Horodecki}}]{Brandao2016}%
\BibitemOpen
\bibfield  {author} {\bibinfo {author} {\bibfnamefont {F.~G. S.~L.}\
		\bibnamefont {Brand{\~{a}}o}}, \bibinfo {author} {\bibfnamefont {A.~W.}\
		\bibnamefont {Harrow}}, \ and\ \bibinfo {author} {\bibfnamefont
		{M.}~\bibnamefont {Horodecki}},\ }\href {\doibase 10.1007/s00220-016-2706-8}
{\bibfield  {journal} {\bibinfo  {journal} {Communications in Mathematical
			Physics}\ }\textbf {\bibinfo {volume} {346}},\ \bibinfo {pages} {397}
	(\bibinfo {year} {2016})},\ \Eprint {http://arxiv.org/abs/quant-ph/1208.0692}
{quant-ph/1208.0692} \BibitemShut {NoStop}%
\bibitem [{\citenamefont {B{\'{e}}ny}(2009)}]{Beny2009}%
\BibitemOpen
\bibfield  {author} {\bibinfo {author} {\bibfnamefont {C.}~\bibnamefont
		{B{\'{e}}ny}},\ }\enquote {\bibinfo {title} {Conditions for the approximate
		correction of algebras},}\ in\ \href {\doibase 10.1007/978-3-642-10698-9_7}
{\emph {\bibinfo {booktitle} {Theory of Quantum Computation, Communication,
			and Cryptography: 4th Workshop,TQC 2009, Waterloo, Canada, May 11-13, 2009,
			Revised Selected Papers}}},\ \bibinfo {editor} {edited by\ \bibinfo {editor}
	{\bibfnamefont {A.}~\bibnamefont {Childs}}\ and\ \bibinfo {editor}
	{\bibfnamefont {M.}~\bibnamefont {Mosca}}}\ (\bibinfo  {publisher} {Springer,
	Berlin, Heidelberg},\ \bibinfo {year} {2009})\ pp.\ \bibinfo {pages}
{66--75}\BibitemShut {NoStop}%
\end{thebibliography}

%% .. or use the thebibliography environment explicitly (i.e importing the generated bbl file)
%merlin.mbs apsrev4-1.bst 2010-07-25 4.21a (PWD, AO, DPC) hacked
%Control: key (0)
%Control: author (8) initials jnrlst
%Control: editor formatted (1) identically to author
%Control: production of article title (-1) disabled
%Control: page (0) single
%Control: year (1) truncated
%Control: production of eprint (0) enabled
%

%%  APPENDIX
\appendix
\section{Comparison of uberholography dimensional exponent with explicit codes}\label{app:Comparison}

%New
In \S \ref{sec:uberholography} we computed the universal fractal dimension  $\alpha \approx 0.786$ for a holographic code defined on the Poincar\'e disk, assuming that geometric complementarity and the entanglement wedge hypothesis are precisely satisfied. We may also define a fractal dimension for other code families, such as concatenated quantum codes or holographic tensor network codes.  Choosing a particular logical algebra ${\cal A}$ (for example, the algebra of a logical qubit at the center of the bulk), let 
\begin{align}
\alpha_p := \lim_{\ell \rightarrow \infty} \frac{\log{p_\ell}}{\log{n_\ell}}, \quad
\alpha_d := \lim_{\ell \rightarrow \infty} \frac{\log{d_\ell}}{\log{n_\ell}},
\end{align}
where $p_\ell$ and $d_\ell$ are the price and distance of the  logical algebra after $\ell$ levels of concatenation or equivalent iteration. (If ${\cal A}$ is the algebra of local operators at the center of the bulk, then we may think of $\ell$ as the radial distance from the center of the bulk to its boundary.)
The no free lunch lemma ensures that  $\alpha_p \geq \alpha_d$.

% concatenated code
By a concatenated code we mean a recursive hierarchy of codes within codes; these can be constructed in many ways. In the simplest case, we consider an $ [[n,1,d]]$ code ${\cal C}_1$, with just one logical qubit, which has an encoder isometrically mapping one qubit to a block of $n$ physical qubits.  The code ${\cal C}_2$, which has $k=1$ and length $n_2=n^2$, is obtained by applying this encoder to each of the $n$ physical qubits in ${\cal C}_1$; likewise the code ${\cal C}_{\ell}$, with length $n^{\ell}$, is obtained by applying the encoder to each of the $n^{\ell -1}$ qubits in the code ${\cal C}_{\ell -1 }$. The corresponding tensor network, with one logical qubit at its center, is a branching tree extending radially outward in which each branch has $n$ descendants. 

Suppose that $n$ is odd and the code ${\cal C}_1$ has the largest possible distance $d = (n+1)/2$. The complementarity bound eq.(\ref{eq:p+d}) then implies that the price is $p=d$; the full logical algebra can be supported on $d$ of the $n$ qubits, and all nontrivial logical Pauli operators have weight $d$. Therefore, all nontrivial logical operators of ${\cal C}_{\ell}$ can be supported on $d^{\ell}$ qubits and all have weight $d^{\ell}$. We conclude that
\begin{align}
\alpha_p = \alpha_d = \frac{\log d}{\log n} . 
\end{align}
As $n$ increases, $\alpha_p$ and $\alpha_d$ approach $1$ from below. Although the tensor network can be embedded in a plane, it does not approximate the geometry of the Poincar\'e disk, and its price and distance obey a different scaling law than we found in \S \ref{sec:uberholography}.

A more complicated recursive encoding scheme, based on an $[[n,k,d]]$ code ${\cal C}_1$ with $k > 1$, is depicted in figure \ref{fig:BranchingConcatenation}. In this case, the ${\cal C}_1$ encoder maps $k$ qubits to a block of $n$ qubits. To build the code ${\cal C}_{\ell}$, we first assemble $k$ copies of the code ${\cal C}_{\ell - 1}$, and then apply the $k \to n$  encoder for ${\cal C}_1$ all together $n^{\ell -1 }$ times, where each encoder acts on $k$ qubits drawn from the $k$ distinct copies. Each time we add another layer to the code, the number of encoded qubits increases by a factor of $k$ and the number of physical qubits increases by a factor of $n$, therefore
\begin{align}\label{eq:n-and-k-concatenated}
n_\ell = n^\ell, \quad k_\ell = k^\ell.
\end{align}
However, in this case the price and distance of ${\cal C}_{\ell}$ are not so easy to calculate, though we can derive some simple bounds. When we add an additional layer to the code, each nontrivial logical operator of $C_{\ell-1}$ maps to a logical operator of ${\cal C}_\ell$ whose weight is {\it at least} $d$ times larger; furthermore, if all logical operators of ${\cal C}_{\ell - 1}$ can be supported on $w$ physical qubits, then {\it at most} $pw$ qubits are needed to support all logical operators of ${\cal C}_{\ell}$. We therefore have
\begin{align}\label{eq:concatenated-price-distance}
d^\ell \le d_\ell \le p_\ell\le p^\ell. 
\end{align}
But to make a more precise statement about the price and distance of  $C_\ell$ we need more information about the structure of ${\cal C}_1$.

\begin{figure}[ht]
	\begin{center}
		\includegraphics[width=0.8\columnwidth]{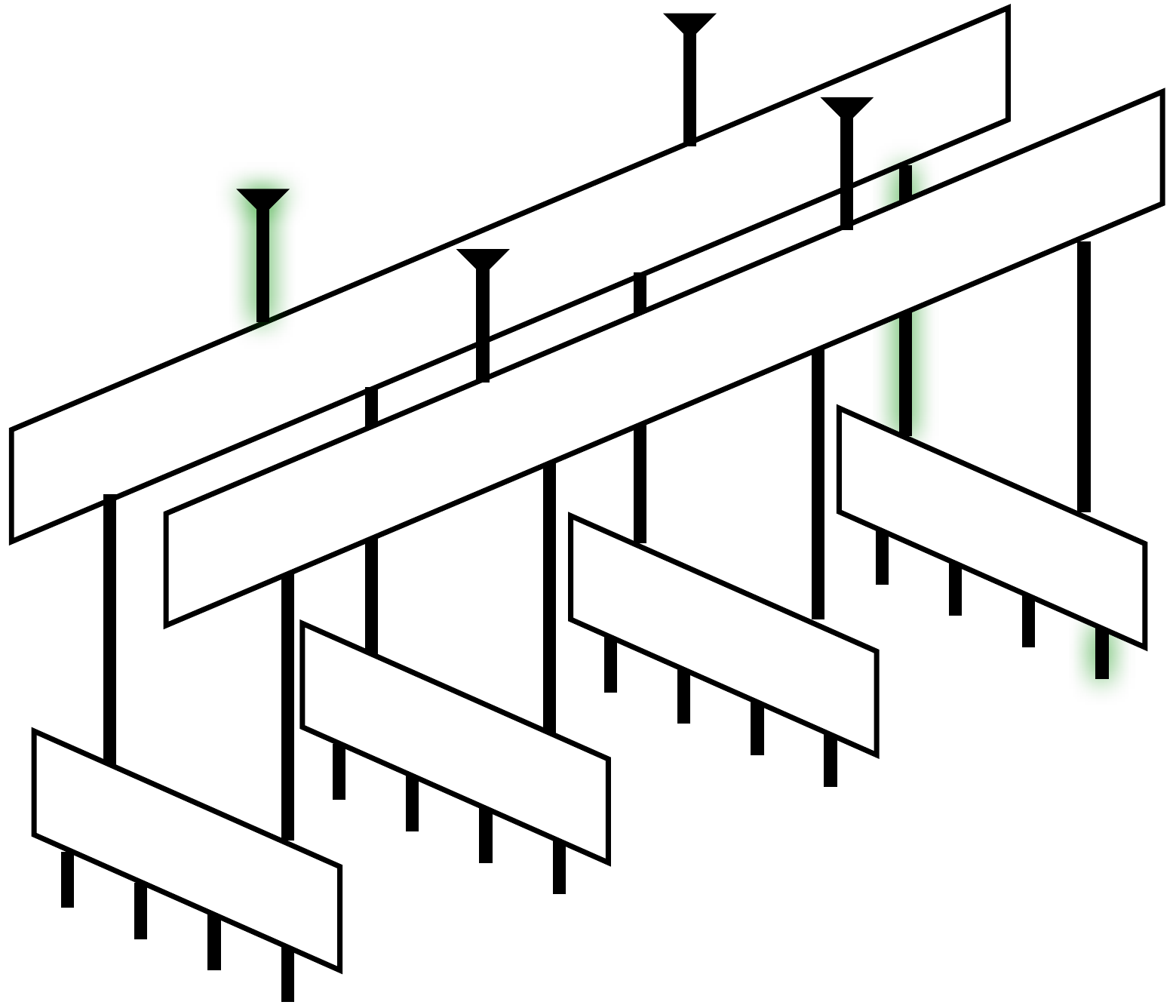}
		\caption{This figure illustrates a recursive coding network to which eq.(\ref{eq:n-and-k-concatenated}) and eq.(\ref{eq:concatenated-price-distance}) apply, with logical qubits at the top and physical qubits at the bottom. To derive that the distance is bounded below by $d^\ell$ and the price is bounded above by $p^\ell$, it suffices to observe that there is a unique ``causal'' path connecting each physical qubit to each logical qubit. Here a $[[4,2,2]]$ code (with encoder drawn as a parallelogram) is concatenated to obtain a $[[16,4,4]]$ code, and a causal path connecting a physical qubit to a logical qubit is highlighted. 
		}\label{fig:BranchingConcatenation}
	\end{center}
\end{figure}

%Holographic pentagon code.
We may also consider the price and distance of holographic tensor network codes, which capture some of the features of full blown AdS/CFT duality \cite{Pastawski2015}. For example, we can tile the Poincar\'e disk with pentagons, and associate a 6-index ``perfect tensor'' with each pentagon, where each pentagon carries a single logical qubit. For this pentagon code, where $\cA$ is the logical algebra of the central pentagon, we find that the price  $p_\ell(\cA)$ and distance $d_\ell(\cA)$ are badly mismatched (where $\ell$ denotes the graph distance form the central pentagon to the physical boundary of the tensor network). In fact, the distance $d_\ell(\cA)=4$ is a constant independent of $\ell$, as explained in \S 5.6 of \cite{Pastawski2015}; in other words, there are logical operators of weight 4 which act nontrivially on the central qubit.
In contrast, we expect the price to scale as $p_\ell (\cA) = n_\ell^{\alpha_{p5}}$ with $0.786 < \alpha_{p5}< 1$ ({\it i.e.},  with an exponent larger than the value attained in idealized holography). This upper bound may be derived using a discrete version of the hole punching approach, thereby constructing explicitly a subset of the physical boundary qubits for which the greedy algorithm of \cite{Pastawski2015} reaches the central tensor. 
It is more difficult to obtain lower bounds on $p_\ell$, as these cannot be witnessed by examples.

A nontrivial scaling exponent for the distance $d_\ell(\cA)$ can be obtained if we thin out the logical qubits, replacing pentagons in the bulk by hexagons which carry no logical qubit index. 
(In such codes the central qubit is well protected against erasure of a randomly chosen nonzero fraction of all the physical boundary qubits, as shown in \cite{Pastawski2015}.) Codes with relatively sparse bulk logical qubits are better suited than the pentagon code for illustrating the ideas we have explored in this paper, where we have focused on the regime in which geometric entanglement dominates bulk entanglement. We may anticipate that holographic tensor network code families which mimic the geometry of the Poincar\'e disk will have a price scaling exponent $\alpha_p$ which approximates $\alpha\approx 0.786$ from above and a distance scaling exponent $\alpha_d$ which approximates $\alpha$ from below. We have confirmed this expectation by studying some examples, though we have no rigorous general argument.

\end{document}